\newcommand{\cta}{\theta_a}
\newtheorem{theorem}{\bf Theorem}[section]
\newtheorem{lemma}[theorem]{\bf Lemma}
\newtheorem{prop}[theorem]{\bf Proposition}
\newtheorem{coro}[theorem]{\bf Corollary}
\newenvironment{proof}{\noindent{\em Proof:}}{\quad \hfill$\Box$\vspace{2ex}}
  \def\hml{\end{document}}  \newsymbol\wjzhml 203F 
\newcommand{\dd}[1]{\, {\mathrm d}{#1}}
\newcommand{\sinc}{\mathrm{sinc}}
\newcommand{\e}{\mathrm{e}}
\newcommand{\im}{\mathrm{i}}
\begin{document}

\title{ Amplitudes of mono-components and representation by generalized sampling functions}

\author{ Qiuhui Chen \thanks{ Qiuhui Chen, Cisco School of Informatics, Guangdong University of
Foreign Studies, Guangzhou,  CHINA.  Email: {chenqiuhui@hotmail.com}},
\quad Luoqing Li \thanks{Luoqing Li, Faculty of Mathematics and Computer Science, Hubei University, Wuhan, 430062 CHINA. Email: {lilq@hubu.edu.cn}},
\quad Yi Wang \thanks{Yi Wang, corresponding author, Department of Mathematics, Auburn University at Montgomery,
 P.O. Box 244023, Montgomery, AL 36124-4023
 USA.    Email: {ywang2@aum.edu}
 }
 }

\date{}

\maketitle

\begin{abstract}
A mono-component is a real-valued signal of finite energy that has non-negative instantaneous frequencies, which may be defined as  the derivative of the phase function  of  the given real-valued signal through the approach of canonical amplitude-phase modulation. We study in this article how the amplitude is determined by its phase in a canonical amplitude-phase modulation. Our finding is that such an amplitude can  be perfectly reconstructed by a sampling formula using  the so-called generalized sampling functions and their Hilbert transforms. The regularity of such an amplitude is identified to be at least continuous. Meanwhile, we also make a very interesting and new characterization of the band-limited functions.
\end{abstract}
Keywords: mono-component, generalized sampling function, analytic signal, nonlinear phase, amplitude-phase modulation, Hilbert transform, Blaschke product, Poisson kernel.

\section{Introduction}
\setcounter{equation}{0}

Any real-valued non-stationary signal $f$ of finite energy, that is, $f$ is in the space $ L^2(\mathbb{R})$ of square integrable functions on the set $\mathbb{R}$ of real numbers,    may be represented as an
{\em amplitude-phase modulation} with a time-varying amplitude $\rho
$ and a time-varying phase $\phi$
where   phase $\phi  $ is, in general, {\em nonlinear}. Specifically,   the value of $f$ at $t\in\mathbb R$ may be represented as
\begin{equation}\label{amplitude-frequency-modulated-signal}
f(t)=\rho (t)\cos\phi(t).
\end{equation}
Unfortunately, this type of representation is not unique because the modulation is obtained through a complex signal that can have various choices of the imaginary part. However,
one can determine a unique such factorization
(\ref{amplitude-frequency-modulated-signal}) by using the approach of  analytic signals. Indeed, let $\mathcal{A}(f)$ be the {\it analytic signal
associated with} $f$ with the characteristic property
\begin{equation}
 ({\mathcal{A}(f)})^\wedge(\omega)=\left\{\begin{array}{ll}2{\hat
f}(\omega)& ~\mbox{if} ~ \omega\ge 0\\ 0 &~\mbox{if}~ \omega< 0,
\end{array} \right. \label{eqn:analy}
\end{equation}
where for any signal $g \in L^2(\mathbb{R})$, $ \hat{g}={\mathcal F}g $ is the Fourier transform of $g$ defined at $\xi\in \mathbb{R}$ by the equation
\begin{equation}\label{eqn:fourier}
\hat{g}(\xi)=({\mathcal F}g)(\xi):=\frac{1}{\sqrt{2\pi}}\int_{\mathbb R}
g(t)\e^{-\im\xi t}\dd{t} .
\end{equation}
Equation \eqref{eqn:analy} is equivalent to   for $t\in \mathbb{R}$,
\begin{equation*}
\mathcal{A}(f)(t)=f(t)+\im{\mathcal H}f(t), \label{analytic=f+if}
\end{equation*}
where  the operator ${\mathcal H}:L^2(\mathbb{R})\to L^2(\mathbb{R})$ stands for the Hilbert transform, and for $f\in L^2(\mathbb{R})$, $\mathcal{H}f$ at $t\in \mathbb{R}$ is  defined through
the principal value integral
\begin{equation*}
{\mathcal H}f(t):={\mathrm{p.v.}}\frac{1}{\pi}\int_{\mathbb{R}}
\frac{f(x)}{t-x}\dd{x}=\lim_{\epsilon\to 0}\int_{|x-t|>\epsilon
}\frac{f(x)}{t-x}\dd{x}. \label{hilbert-transform}
\end{equation*}
The  value   $\mathcal{A}(f)(t) $ at $t\in \mathbb{R}$ is complex  which can
be written into the {\em quadrature form}
\begin{equation*}
\mathcal{A}(f)(t)=\rho (t)\e^{\im\phi(t)}.
\label{analytic-complex-form}
\end{equation*}

 Under the conditions that the derivative value $\phi'(t)$ is non-negative or non-positive
for all $t\in \mathbb{R}$,   the quantities $\rho (t)$ and $\phi^\prime(t)$ are
called the {\it instantaneous amplitude} and {\it instantaneous
frequency} at $t\in \mathbb{R}$, of the real signal $f$, respectively.  The corresponding modulation
(\ref{amplitude-frequency-modulated-signal}) is then called the {\em
canonical amplitude-phase modulation,} or {\em canonical modulation}
for short. The  signal $f$ with such defined non-negative  instantaneous frequencies
is  thus called a {\em mono-component}. A large body of literature addresses this
problem, see for example, \cite{Bedrosian,Boashash, Cohen,Nuttall, Picinbono, Doro2003, XCo1999}.

Constructing the  canonical pair $(\rho,\phi)$ of
the instantaneous amplitude and phase  is important in
the theories of analytic signals.
It is equivalent to  the problem of seeking the function pair   $(\rho, \phi)$
  such that for $t\in \mathbb{R}$, the following equation holds true
\begin{equation}\label{eq:bedrosian-phase-Qian}
{\cal
H}\left(\rho(\cdot)\cos\phi(\cdot)\right)(t)=\rho(t)\sin\phi(t).
\end{equation}
We remark that (\ref{eq:bedrosian-phase-Qian}) can be apparently
considered as a special case of the {\em Bedrosian identity}
$$
{\mathcal H}(fg)=f{\mathcal H}(g).
$$
In \cite{Bedrosian}, the author proved that, if both $f,g$ belong to
$L^2(\mathbb R)$, $f$ is of lower frequency, $g$ is of higher
frequency and $f,g$ have no overlapping frequency, then ${\mathcal
H}(fg)=f{\mathcal H}(g)$.  This classic
result of Bedrosian is not useful for constructing a
mono-component. The reason lies in that the requirement of both
$f$ and $g$  in $L^2(\mathbb R)$ is invalid.

Recently,   an  important   phase function  that renders mono-components  was given
 in \cite{QCL_PhyD05}. The phase function is defined through the boundary values of a
    Blaschke product on a unit disk $\Delta:=\{z:z\in \mathbb{C}, |z|\le 1  \}   $, where $\mathbb{C}$ indicates the set of complex numbers.   Specifically, for
$a \in (-1,1) $, the Blaschke product
  at $z\in\mathbb C\backslash\{1/ {a}\} $ is given by
\begin{equation}\label{eqn:Blaschke-product}
B_{  a}(z)=
\frac{z-a}{1-{a}z}.
\end{equation}
Subsequently the
  {\em non-linear phase} function, denoted by $\theta_a$, is defined at $t\in \mathbb{R}$ by  the equation
\begin{equation}\label{eqn:phase}
 \e^{\theta_{  a}(t)} :=B_{  a}(\e^{\im t}).
\end{equation}
If we recall that   the periodic Poisson kernel  $p_a$   whose value at $t\in \mathbb{R}$ is given by
\begin{equation}\label{eqn:pa}
 p_a(t):=\frac{1-a^2}{1-2a\cos t+a^2},
 \end{equation}
then by taking the derivative of both sides of equation \eqref{eqn:phase}, we find  that
the phase $\theta_a$ is an anti-derivative of $p_a$, and its derivative is always positive,   that is, $$\frac{\dd{}}{\dd{t}}\theta_a(t)= p_a(t)>0.$$

We shall in this paper characterize the amplitude function $\rho$ of finite energy when the phase function $\phi$ is chosen     at $t\in \mathbb{R}$    by $$\phi(t)=\theta_a(t)=\displaystyle\int_{[0,t]}p_a(x)\dd{x}$$ such that  equation (\ref{eq:bedrosian-phase-Qian}) is satisfied. Our main result indicates that such kind of amplitude can be perfectly reconstructed  in terms of a   sampling formula using the {\em  generalized sampling function} whose value at $t\in \mathbb{R}$ is given by
\begin{equation}\label{eq-sinc-one-parameter}
\sinc_{  a}( t):=\frac{\sin \theta_{
a}( t)}{ t}.
\end{equation}

  In Section 2, we review the construction of the generalized sampling function and discuss some properties pertained to  it. In Section 3, we introduce the concept of Bedrosian subspace of the Hilbert transform and investigate some properties of functions in this space.  In Section 4, we make an important observation when a {\em linear phase} is chosen, the amplitude function must be bandlimited in order to satisfy equation \eqref{eq:bedrosian-phase-Qian}. In Section 5, we present our main result in Theorem \ref{thm:sampling}.

\section{Generalized sampling functions}
Not very surprisingly the function $\sinc_a$ has many properties that are similar to the classic {\em sinc} defined at   $t\in \mathbb{R}$    by the equation
 $$
\mathrm{sinc} (t):=\frac{\sin t}{t}. $$ Those properties include  cardinality, orthogonality, decaying rate, among others. In the special case $a=0$, the function $\sinc_a$   reduces to the classic $\sinc$, which will become clear later. Let us first review the approach to obtain an explicit form of $\sinc_a$.

The classic  {sinc} function
is fundamentally significant in digital signal processing due to
the Shannon
 sampling theorem
\cite{Sha1948,Sha1949,BER1986}.
The Shannon sampling theorem enables to reconstruct a bandlimited signal from shifts of sinc functions weighted by the uniformly spaced samples of that signal. Recently efforts have been made to extend the classic sinc to  {generalized sampling functions}, for example, in \cite{CMW2010,ChenQian-AA09,Chen-Wang-Wang}.   Intuitively, the spectrum of the
sinc function is just the indicator function of a symmetric
interval of finite measure. Hence, authors in \cite{CMW2010} are inspired to consider functions with
piecewise polynomial spectra to replace the usual sinc function for
the purpose of sampling   non-bandlimited signals.  One kind of  {generalized sampling functions} given in \cite{CMW2010}, denoted by
$\sinc_a$ that is related to a constant $a\in (-1,1)$, is defined as the {\em inverse Fourier transform} of a so-called {\em symmetric cascade filter}, denoted by  $H_a $. Specifically,
\begin{equation}\label{general-sinc-function}
\sinc_a:=\sqrt{\frac{\pi}{2}}(1+a){\mathcal F}^{-1}H_a .
\end{equation}

Let $\mathbb{N}$ be the set of natural numbers, $\mathbb{Z}$ be the set of integers, and $\mathbb{Z}_+:=\{0\}\cup \mathbb{N}$.  Let $X$ be a subset of $\mathbb{R}$, and for $q\in \mathbb{N}$,  we say a function $f$ is in $L^q(X)$ if and only if the $L^q(X)$ norm
$$
\|f\|_{q,X}:=\left(\int_X |f(t)|^q \dd{t}\right)^{1/q}<\infty.
$$
Similarly, let $Z$ be a subset of $\mathbb{Z}$, a sequence ${\boldsymbol y}:=(y_k:k\in Z) $ is said to be in $l^q(Z)$ if and only if the $l^q(Z)$ norm
$$
\|{\boldsymbol y}\|_{q,Z}:=\left(\sum_{k\in Z} |y_k|^q\right)^{1/q}<\infty.
$$

The symmetric cascade filter $H_a$ is a  piecewise constant function whose value at $\xi\in \mathbb{R}$ is given by
\begin{equation}\label{eqn:general-H}
H_a (\xi):=\sum_{n\in \mathbb{Z}_+}a^n \chi_{I_n}(\xi),
\end{equation}
where   $\chi_I$ is the {\em indicator function} of the set $I$, and the interval $I_n$, $n\in \mathbb{Z}_+$, is the union of two symmetric intervals  given by the equation
$$
{I}_{n} := (-(n+1) ,-n ]\cup[n ,(n+1) ).
$$
Of course, we have that $H\in L^1(\mathbb{R})\cap L^2(\mathbb{R})$ because the sequence ${\boldsymbol
b}:=(a^n: n\in \mathbb{Z}_+) \in l^1({\mathbb Z}_+)\cap l^2({\mathbb Z}_+) $, and hence $\sinc_a\in L^2(\mathbb{R})$ since the Fourier operator is closed in  $L^2(\mathbb{R})$ and $\sinc_a$ is continuous because $H_a\in L^1(\mathbb{R})$.

The symmetric cascade filter $H_a $ can be associated with an {\em analytic function} $F$  on the open unit disk  $\Delta$ defined at   $z \in \Delta$     by
\begin{equation}\label{analfu}
F(z):= \sum_{n\in\mathbb{Z_+}}a^n z^n.
\end{equation}
Thus, by substituting  equation  \eqref{eqn:general-H} into equation \eqref{general-sinc-function} and making use of equation \eqref{analfu} an alternative form of
$\sinc_a (t)$,  $t\in\mathbb R$ in terms of $F$ can be found as
\begin{equation}\label{phi-explicit-form}
\sinc_a(t)=(1+a)\mathrm{sinc}\left(\frac{  t}{ 2}\right)
\mathrm{Re}\left\{F(\e^{\im t})\e^{\frac{1}{2}\im  t} \right\} ,
\end{equation}
where $\mathrm{Re}(z)$ is the real part of a complex number $z$.

A very interesting fact, as discovered in the paper \cite{CMW2010}, is that  the function $F$ is linked to the Blashke Product $B_a$ by the equation
\begin{equation}\label{eqn:def_F}
(1+a)F(z):=\frac{B_a(z)-1}{z-1}.
\end{equation}
Plugging the formula  \eqref{eqn:def_F} into equation \eqref{phi-explicit-form} we readily obtain  the explicit expression of $\sinc_a$ given earlier in equation \eqref{eq-sinc-one-parameter}.

The next two formulas shall be used later. Expanding the left-hand side of equation \eqref{eqn:phase} using  Euler's formula and separating the real part from the imaginary part of the  right-hand side, one obtains that for $t\in \mathbb{R}$
\begin{equation}\label{eqn:sina}
\sin \theta_a(t)=\frac{(1-a^2)\sin t }{1-2a \cos t +a^2}=p_a(t) \sin t
\end{equation}
and
 \begin{equation}\label{eqn:costhetaa}
\cos \theta_a(t)=\frac{(1+a^2)\cos t -2a }{1-2a \cos t +a^2}.
\end{equation}

We next list some properties of the function $\sinc_a$.
\begin{prop}\label{prop:sinc}
Let  the generalized sampling function $\sinc_a$ be defined by equation \eqref{phi-explicit-form} or equation \eqref{eq-sinc-one-parameter}.   Then the following statements hold.
\begin{enumerate}
\item For $t\in \mathbb{R}$, \begin{equation}\label{eqn:sinca}
\sinc_a(t)=\frac{(1-a^2)  }{1-2a \cos t +a^2}\frac{\sin t}{t}=p_a(t) \sinc (t).\end{equation}
\item \begin{equation}\label{general-sinc-function-f}
\mathcal{F}\sinc_a=\sqrt{\frac{\pi}{2}}(1+a) H_a .
\end{equation}
\item   $\sinc_a(n\pi)=\frac{1+a}{1-a}\delta_{n,0}$, where $\delta_{n,0}=1$ if $n=0$ and $\delta_{n,0}=0$ if $n\in \mathbb{Z}\setminus \{0\}$.
    \item $\sinc_a$ is even, bounded, infinitely differentiable.

  \item $|\sinc_a(t)|\le \frac{1+a}{1-a}\frac{2}{1+|t|}$ for $t\in \mathbb{R}$, and $ \sinc_a \in L^2(\mathbb{R})$.

    \item The set $\{ \sinc_a(\cdot -n\pi): n\in \mathbb{Z}  \}$ is an orthogonal set, that is
        $$
        \langle \sinc_a, \sinc_a(\cdot -n\pi)  \rangle = \frac{(1+a) }{1-a} \pi \delta_{n,0},
        $$
        where $\langle u, v \rangle =\int_{\mathbb{R}} u(t){v}^*(t)\dd{t}$ denotes the usual inner product   of the two functions $u,v\in L^2(\mathbb{R})$, and ${v}^*$ is the complex conjugate of $v$.
\end{enumerate}
\end{prop}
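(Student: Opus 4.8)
The plan is to verify the six items in the order listed, since each depends partly on the previous ones, and to channel everything through the two alternative closed forms for $\sinc_a$ already provided: the Blaschke-product form \eqref{eq-sinc-one-parameter} and the filter form \eqref{general-sinc-function}–\eqref{eqn:general-H}. The central computation is item (1), from which items (3), (4), and (5) follow almost immediately by inspecting the function $p_a(t)\sinc(t)$, while item (2) is the Fourier-side restatement of the definition and item (6) is a Parseval argument.

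First I would prove (1) by combining \eqref{eq-sinc-one-parameter} with \eqref{eqn:sina}: since $\sinc_a(t)=\sin\theta_a(t)/t$ and \eqref{eqn:sina} gives $\sin\theta_a(t)=p_a(t)\sin t$, dividing by $t$ yields $\sinc_a(t)=p_a(t)\sin t/t = p_a(t)\sinc(t)$, and substituting the definition \eqref{eqn:pa} of $p_a$ gives the displayed fraction. (At $t=0$ the quotient $\sin\theta_a(t)/t$ is interpreted by continuity, consistent with the $\sinc$ convention $\sinc(0)=1$, so the identity holds for all $t\in\mathbb{R}$.) For (2), I would simply note that \eqref{general-sinc-function} reads $\sinc_a=\sqrt{\pi/2}(1+a)\mathcal{F}^{-1}H_a$, and apply $\mathcal{F}$ to both sides; one needs $H_a$ even so that $\mathcal{F}^{-1}H_a=\mathcal{F}H_a$, which is clear from \eqref{eqn:general-H} since each $I_n$ is symmetric. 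For (3), evaluate the formula from (1) at $t=n\pi$: for $n\neq 0$ the factor $\sin(n\pi)=0$ kills everything, and at $n=0$ one gets $p_a(0)\cdot 1 = (1-a^2)/(1-a)^2 = (1+a)/(1-a)$. For (4), evenness is visible from the right-hand side of (1) ($p_a$ and $\sinc$ are both even); boundedness follows because $0<p_a(t)\le p_a(0)=(1+a)/(1-a)$ when $a\ge 0$ (and symmetrically $|p_a|$ is bounded for $a\in(-1,0)$) while $|\sinc|\le 1$; infinite differentiability follows from \eqref{phi-explicit-form} (or from $\sinc_a=\sin\theta_a/t$ with $\theta_a$ smooth and the removable singularity at $0$), alternatively from $\mathcal{F}\sinc_a = \sqrt{\pi/2}(1+a)H_a$ having compact support is false—so I would rely on the explicit smooth formula rather than a Paley–Wiener argument here. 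For (5), I would bound $|p_a(t)|\le (1+a)/(1-a)$ and $|\sinc(t)| = |\sin t/t|\le 2/(1+|t|)$ (the elementary inequality: $|\sin t|\le 1$ handles $|t|\ge 1$, and $|\sin t/t|\le 1 \le 2/(1+|t|)$ handles $|t|\le 1$), then multiply; membership in $L^2(\mathbb{R})$ follows since $2/(1+|t|)$ times a bounded factor is square-integrable, or alternatively directly from $\mathcal{F}\sinc_a\in L^2$ via Plancherel.

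For (6) I would pass to the Fourier side and use Plancherel: $\langle \sinc_a(\cdot), \sinc_a(\cdot-n\pi)\rangle = \langle \widehat{\sinc_a}, \e^{-\im n\pi\xi}\widehat{\sinc_a}\rangle$, and with (2) this becomes $\frac{\pi}{2}(1+a)^2\int_{\mathbb{R}} H_a(\xi)^2 \e^{\im n\pi\xi}\dd{\xi}$. Since $H_a(\xi)^2 = \sum_{n\in\mathbb{Z}_+} a^{2n}\chi_{I_n}(\xi)$ (the $I_n$ are disjoint) and the integral of $\e^{\im k\pi\xi}$ over $I_n$ reduces to integrals over unit-length intervals with integer endpoints—on which $\int e^{\im k\pi\xi}\dd\xi$ vanishes for $k$ even and nonzero and is handled explicitly for $k$ odd—I would need to track the parity carefully. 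Actually the cleanest route is the other direction: the set $\{\sqrt{2/\pi}\,e^{\im n\pi\xi}\chi_{[-1,1]}\}$ style orthogonality on dyadic pieces, but since the shift is by $\pi$ I would instead argue via the Poisson summation / sampling structure: because $\widehat{\sinc_a}$ is supported where it is and the samples $\sinc_a(n\pi)$ are given by (3), one expects $\sum_k \sinc_a(t-k\pi)\,c_k$ reproducing kernels. The honest main obstacle is precisely this step (6): getting the exact constant $\frac{(1+a)}{1-a}\pi\,\delta_{n,0}$ requires either a careful evaluation of $\int_{I_n}\cos(n\pi\xi)\dd\xi$ summed against the geometric weights $a^{2n}$, or recognizing $\{\sinc_a(\cdot-n\pi)\}$ as an orthogonal basis of the shift-invariant space it generates and computing the norm-squared $\|\sinc_a\|_2^2 = \frac{\pi}{2}(1+a)^2\sum_{n\ge0} a^{2n}|I_n| = \pi(1+a)^2/(1-a^2) = \pi(1+a)/(1-a)$, which already pins down the $n=0$ constant; the vanishing for $n\neq 0$ then follows because $H_a^2$ is an integer-translate-periodization-friendly step function constant on unit intervals, so $\widehat{H_a^2}$ (as a function on $\pi\mathbb{Z}$ via the stated integral) vanishes off $0$. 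I would present it through this norm computation plus the disjoint-support parity argument.
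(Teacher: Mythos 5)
Your items (1)--(5) follow the paper's proof essentially verbatim: (1) is the substitution of \eqref{eqn:sina} into \eqref{eq-sinc-one-parameter}; (2) is the Fourier transform of both sides of \eqref{general-sinc-function}; and (3)--(5) are read off from the factorization $\sinc_a=p_a\cdot\sinc$ using $0<p_a(t)\le\frac{1+a}{1-a}$ and $|\sinc(t)|\le\frac{2}{1+|t|}$, exactly as in the paper. The remark that a Paley--Wiener argument is unavailable is correct and your fallback to the explicit smooth formula is fine.

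The genuine gap is in item (6), which you yourself flag as the main obstacle and then do not close. After Parseval reduces the inner product to $\frac{\pi}{2}(1+a)^2\sum_{k\in\mathbb{Z}_+}a^{2k}\int_{I_k}\e^{\im n\pi\xi}\dd{\xi}$, the vanishing for $n\neq 0$ does \emph{not} follow from your stated reason that $H_a^2$ is a step function constant on unit intervals with integer endpoints: for \emph{odd} $n$ one has $\int_{k}^{k+1}\e^{\im n\pi\xi}\dd{\xi}=\frac{(-1)^{nk}\left((-1)^n-1\right)}{\im n\pi}\neq 0$, so the transform of such a step function need not vanish off $0$. What rescues the claim is the two-sided symmetry of $I_k=(-(k+1),-k]\cup[k,k+1)$: for odd $n$ the contributions of the two halves are exact negatives and cancel, while for even nonzero $n$ each half integrates to zero; combined with the value $2$ at $n=0$ this gives the identity $\int_{I_k}\e^{\im n\pi\xi}\dd{\xi}=2\delta_{n,0}$, which is precisely what the paper invokes, and the geometric series $\sum_k a^{2k}=\frac{1}{1-a^2}$ then produces the constant $\frac{1+a}{1-a}\pi$. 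Your norm computation for $n=0$ is correct and consistent with this, but the odd-$n$ cancellation must be carried out explicitly; the detours through Poisson summation and shift-invariant-space structure are unnecessary once this one-line identity is in hand.
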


\begin{proof}
Equation \eqref{eqn:sinca} directly follows by substituting equation \eqref{eqn:sina}  into equation \eqref{eq-sinc-one-parameter}. Equation \eqref{general-sinc-function-f} is obtained by taking the inverse Fourier transform of both sides of equation \eqref{general-sinc-function}.
The third and fourth statements directly follow from  equation \eqref{eqn:sinca}.
 The fifth statement follows from equation   \eqref{eqn:sinca} and noticing $  \sinc (t) \le \frac{2}{1+|t|}$ and  $p_a(t)\le \frac{1+a}{1-a}$ for any $t\in \mathbb{R}$.
The last statement is a special case of Corollary 3.2 of \cite{CMW2010}. For the convenience of readers, we provide a direct proof here. By Parseval's theorem and    equation  \eqref{general-sinc-function-f}  we have
\begin{eqnarray*}
\int_\mathbb{R} \sinc_a(t)\sinc_a(t-n \pi)\dd{t} &=&  \frac{\pi}{2}(1+a)^2\int_\mathbb{R} H_a^2(x) \e ^{\im n\pi x}\dd{x}\\
&=& \frac{\pi}{2}(1+a)^2\int_\mathbb{R} \sum_{k\in \mathbb{Z}_+}a^{2k} \chi_{I_k}(x) \e ^{\im n\pi x}\dd{x}\\
&=& \frac{\pi}{2}(1+a)^2\sum_{k\in \mathbb{Z}_+}a^{2k}  \int_{I_k} \e ^{\im n\pi x}\dd{x}
= \frac{1+a}{1-a}\pi   \delta_{n,0},
\end{eqnarray*}
where, in the last equality we have used the orthogonality  identity $\int_{I_k}\e ^{\im n\pi x}\dd{x}=2\delta_{n,0} $,    $k\in \mathbb{Z}_+$.
The interchange of the integral operator and the infinite sum is guaranteed by the absolute convergence of the series.
\end{proof}

In figure \ref{fig:fourierpair}, we show an example of the Fourier transform pair $\sinc_a$ and $\sqrt{\frac{\pi}{2}}(1+a)H_a$ with $a=0.5$. In the plot of $\sinc_a$, the graph of the standard $\sinc$ is also shown, which corresponds to the case $a=0$.
\begin{figure}[ht]
$$
\includegraphics[scale=0.6]{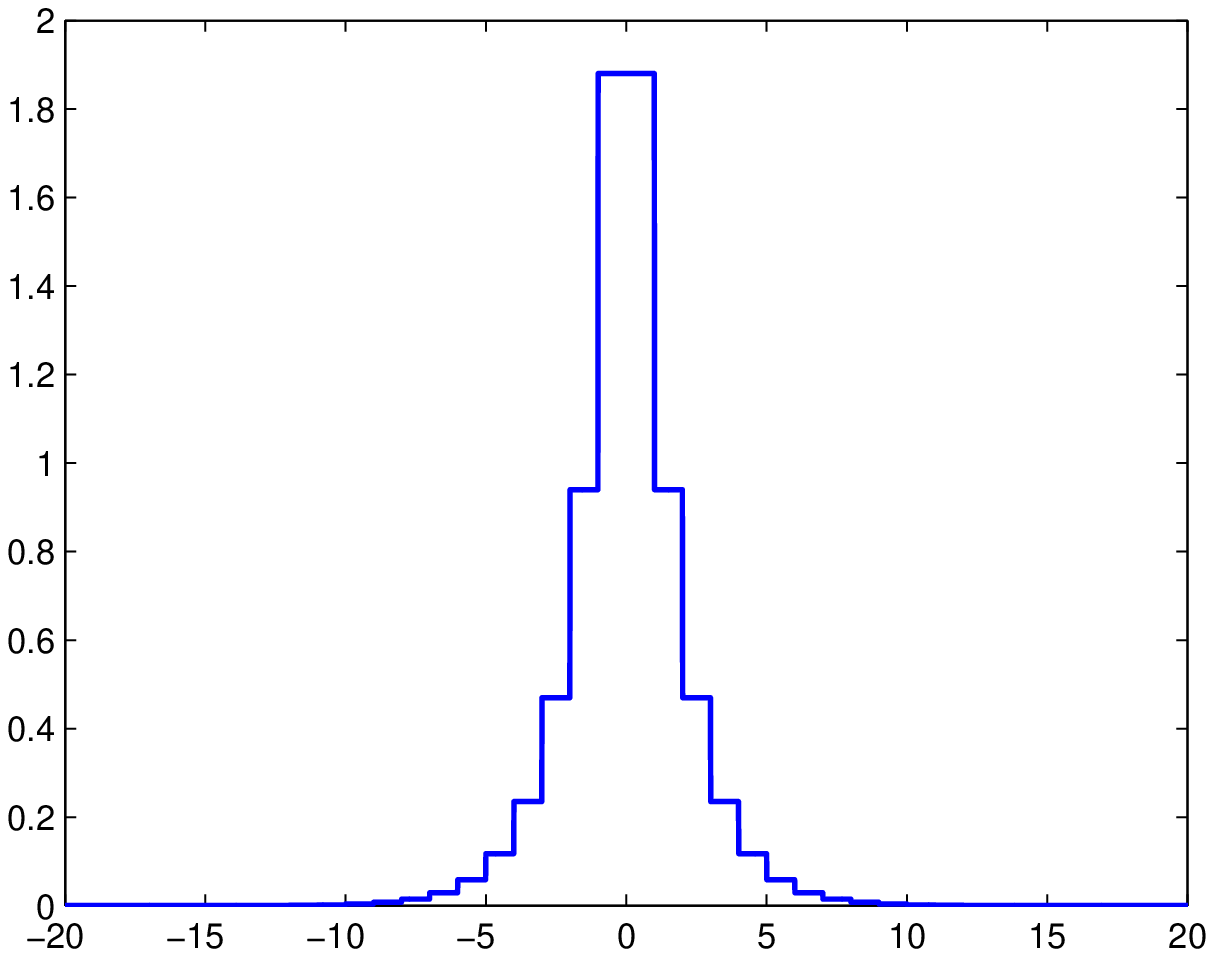}\includegraphics[scale=0.6]{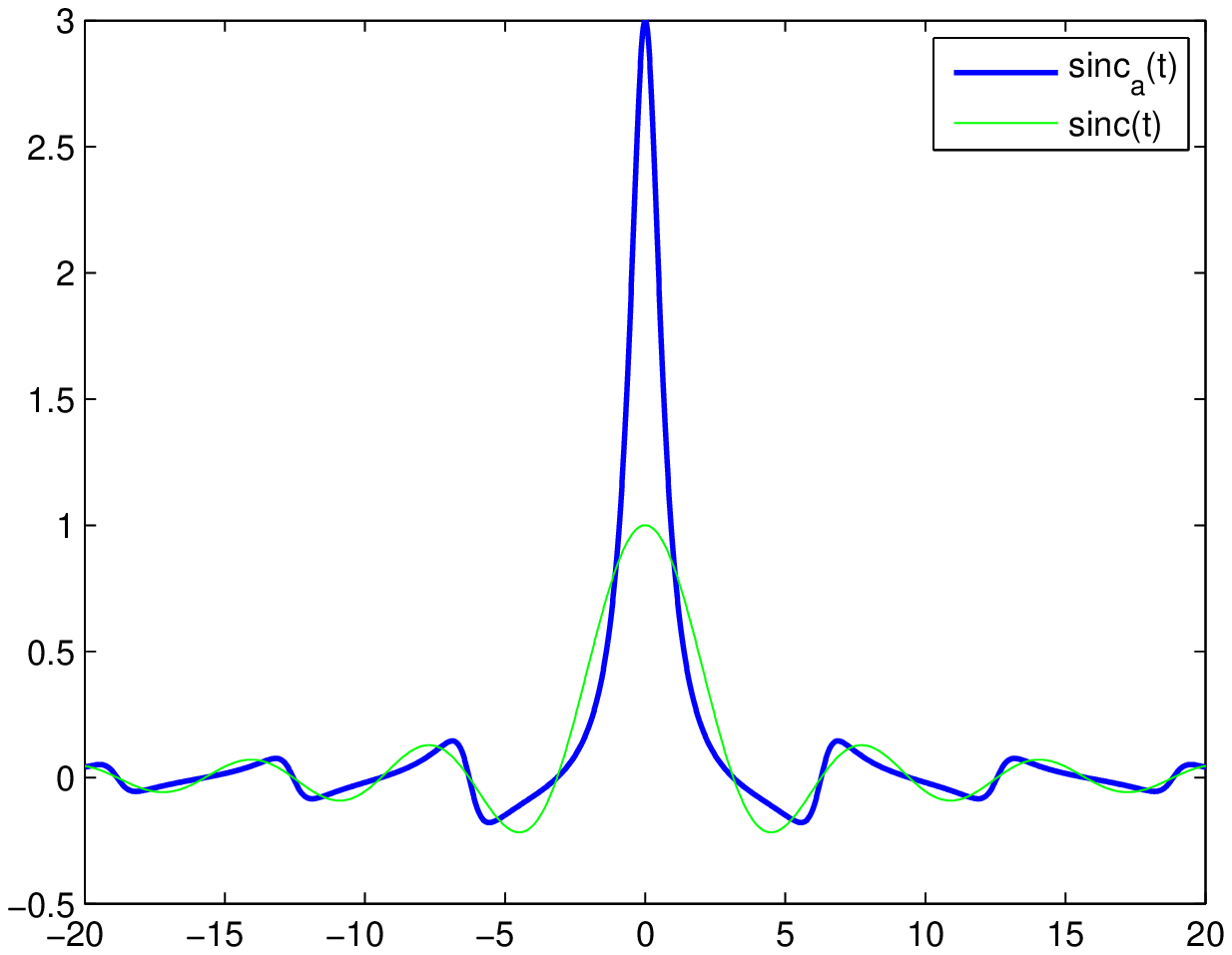}
$$
\caption{Left: the graph of $\sqrt{\frac{\pi}{2}}(1+a)H_a$, $a=0.5$; Right: the graph of $\sinc_a$,   $a=0.5$.}\label{fig:fourierpair}
\end{figure}

Next we review several basic properties of the Hilbert transform which we will need  frequently later.  These properties can be found, for example, in the book \cite{King2009}. First the Hilbert transform is an {\em anti-involution}, that is,
\begin{equation}\label{eqn:Hil1}
\mathcal{H}^2=-\mathcal{I}
\end{equation}
where $\mathcal{I}$ is the identity operator. Second the operator $\mathcal{H}$ is anti-self adjoint, that is,
\begin{equation}\label{eqn:Hil2}
\langle \mathcal{H} u, v  \rangle = \langle u, -\mathcal{H}v   \rangle.
\end{equation}
Third, for any $f\in L^2(\mathbb{R})$ and $t\in \mathbb{R}$, the composition of  the Fourier transform and the Hilbert transform is given by
\begin{equation}\label{eqn:Hil3}
\mathcal{F}(\mathcal{H} f )(t)=-\im \ \mathrm{sgn}(t)\mathcal{F}f(t),
\end{equation}
where $\mathrm{sgn}(\cdot)$ is the {\em signum function} having values defined by  $\mathrm{sgn}(x)=1$ if $x\in \mathbb{R}_+:=\{t\in \mathbb{R}:t>0\}$, $\mathrm{sgn}(x)=-1$ if $x\in \mathbb{R}_-:=\{t\in \mathbb{R}:t<0\}$, and $\mathrm{sgn}(0)=0$.

\begin{theorem}\label{prop:ortho}
The system
\begin{equation}\label{eqn:Phi}
\Phi:=\{ \sinc_{a} (\cdot -2k\pi), \mathcal{H}
\sinc_{a} (\cdot-2k\pi):k\in \mathbb{Z} \}
\end{equation}
 is an orthogonal
system in $L^2(\mathbb{R})$.
\end{theorem}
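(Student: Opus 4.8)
The plan is to work entirely on the Fourier side, where both the Hilbert transform and translation become multiplication operators, and to exploit the explicit spectrum $\mathcal{F}\sinc_a=\sqrt{\pi/2}\,(1+a)H_a$ from Proposition~\ref{prop:sinc}(2). Write $g:=\sinc_a$ and $\hat g=\sqrt{\pi/2}\,(1+a)H_a$, so that $\hat g$ is real-valued, even, and supported on $\bigcup_{n\in\mathbb Z_+}I_n=\mathbb R$ but taking the value $a^n$ (up to the constant) on $I_n$. By Parseval, computing any of the four families of inner products in $\Phi$ reduces to an integral of the form $\int_{\mathbb R}\widehat{u}(\xi)\,\overline{\widehat{v}(\xi)}\,\e^{\im(j-k)\xi}\dd{\xi}$ where the $\widehat{u},\widehat{v}$ are among $\hat g(\xi)$ and $(\mathcal F\mathcal H g)(\xi)=-\im\,\mathrm{sgn}(\xi)\hat g(\xi)$, and the translations by $2k\pi$ produce the exponential factor via the shift rule.

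The key computation splits into three cases according to the types of the two members of $\Phi$. First, $\langle g(\cdot-2j\pi),g(\cdot-2k\pi)\rangle$ is already covered: by the orthogonality identity $\int_{I_k}\e^{\im n\pi x}\dd{x}=2\delta_{n,0}$ used in the proof of Proposition~\ref{prop:sinc}, and since $2k\pi$ shifts correspond to exponentials $\e^{\im 2(j-k)\pi\xi}$ which integrate to $2\delta_{j,k}$ on each $I_k$, this family is orthogonal (indeed it is a sub-collection of the system in part~(6) with step $2\pi$ instead of $\pi$). Second, $\langle\mathcal H g(\cdot-2j\pi),\mathcal H g(\cdot-2k\pi)\rangle$: on the Fourier side the integrand picks up the extra factor $(-\im\,\mathrm{sgn}\,\xi)(\overline{-\im\,\mathrm{sgn}\,\xi})=\mathrm{sgn}(\xi)^2=1$ for a.e.\ $\xi$, so this reduces to exactly the same integral as the first case and is again orthogonal. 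Third, the cross terms $\langle g(\cdot-2j\pi),\mathcal H g(\cdot-2k\pi)\rangle$: here the integrand is $\hat g(\xi)\,\overline{(-\im\,\mathrm{sgn}\,\xi)\hat g(\xi)}\,\e^{\im 2(j-k)\pi\xi}=\im\,\mathrm{sgn}(\xi)\,|\hat g(\xi)|^2\,\e^{\im 2(j-k)\pi\xi}$. Since $|\hat g(\xi)|^2$ is even and $\mathrm{sgn}(\xi)$ is odd, the product $\mathrm{sgn}(\xi)|\hat g(\xi)|^2$ is odd, and after pairing $\xi$ with $-\xi$ the exponential contributes a cosine, so the whole integrand is odd; hence the integral over $\mathbb R$ vanishes for every $j,k$. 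Thus every cross inner product is zero, which together with the first two cases establishes orthogonality of $\Phi$.

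The only genuine subtlety — and the step I would treat most carefully — is the third case, the vanishing of the cross terms. One must be sure that $\mathrm{sgn}(\xi)|\hat g(\xi)|^2\e^{\im 2(j-k)\pi\xi}$ is integrable (it is, since $\hat g\in L^2$ and is bounded, so $|\hat g|^2\in L^1$) so that the ``odd integrand integrates to zero'' argument is legitimate rather than a formal cancellation of divergent pieces; a clean way is to write the integral as $\int_0^\infty|\hat g(\xi)|^2\bigl(\im\e^{\im 2(j-k)\pi\xi}-\im\e^{-\im 2(j-k)\pi\xi}\bigr)\dd{\xi}=-2\int_0^\infty|\hat g(\xi)|^2\sin\!\bigl(2(j-k)\pi\xi\bigr)\dd{\xi}$ and then observe that on each $I_n\cap(0,\infty)=[n,n+1)$ the weight $|\hat g|^2$ is constant while $\int_n^{n+1}\sin(2m\pi\xi)\dd{\xi}=0$ for every integer $m$, so the sum telescopes to $0$; the interchange of sum and integral is justified exactly as in Proposition~\ref{prop:sinc}(6) by absolute convergence of $\sum a^{2n}$. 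The remaining two cases are then immediate, and I would present them briefly by citing the computation already carried out in the proof of Proposition~\ref{prop:sinc}.
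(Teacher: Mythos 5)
Your proposal is correct and follows essentially the same route as the paper's proof: Parseval's theorem together with the explicit piecewise-constant spectrum $\sqrt{\pi/2}\,(1+a)H_a$, reduction of the $\mathcal H$--$\mathcal H$ inner products to the plain ones, and an interval-by-interval integration over the $I_n$ to kill the cross terms (your $\int_n^{n+1}\sin(2m\pi\xi)\,\mathrm d\xi=0$ computation is the paper's cancellation of the two bracketed exponential integrals in disguise). The only cosmetic differences are that the paper handles the $\mathcal H$--$\mathcal H$ case via the operator identities $\mathcal H^2=-\mathcal I$ and anti-self-adjointness rather than on the Fourier side, and your opening ``the whole integrand is odd'' remark is slightly loose (only the cosine part dies by parity), but your subsequent explicit computation repairs this completely.
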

\begin{proof}
By the third statement  of Proposition \ref{prop:sinc}, we have that
\begin{eqnarray*}
\langle \sinc_a, \sinc_a(\cdot-2k\pi)\rangle = \frac{1+a}{1-a}\delta_{0k}.
\end{eqnarray*}
Invoking equations \eqref{eqn:Hil1} and \eqref{eqn:Hil2} yields
\begin{eqnarray*}
\langle \mathcal{H}\sinc_a, \mathcal{H}\sinc_a(\cdot-2k\pi)\rangle &=&\langle  \sinc_a, -\mathcal{H}^2\sinc_a(\cdot-2k\pi)\rangle\\
&=&\langle \sinc_a, \sinc_a(\cdot-2k\pi)\rangle= \frac{1+a}{1-a}\delta_{0k}.
\end{eqnarray*}
  By Parseval's theorem and equation \eqref{general-sinc-function-f} we have
\begin{eqnarray}
\int_\mathbb{R} \sinc_a(t) \mathcal{H}\sinc_a(t-2k\pi)\dd{t}&=& \frac{(1+a)^2\pi \im }{2}\int_\mathbb{R} H_a^2(t)\mathrm{sgn} (t) \e^{\im 2k\pi t }\dd{t}. \label{eqn:product}
\end{eqnarray}
Invoking the expression \eqref{eqn:general-H} of $H_a$,
equation \eqref{eqn:product} becomes
\begin{eqnarray*}
&&\int_\mathbb{R} \sinc_a(t) \mathcal{H}\sinc_a(t-2k\pi)\dd{t}\\
&=& \frac{(1+a)^2\pi \im }{2} \int_{\mathbb{R}} \sum_{n\in \mathbb{Z}_+}a^{2n} \chi_{I_n}(t)\mathrm{sgn} (t) \e^{\im 2k\pi t }\dd{t}\\
&=& \frac{(1+a)^2\pi \im }{2} \sum_{n\in \mathbb{Z}_+}a^{2n}\int_{I_n} \mathrm{sgn} (t) \e^{\im 2k\pi t }\dd{t}\\
&=& \frac{(1+a)^2\pi \im }{2} \sum_{n\in \mathbb{Z}_+} a^{2n}\left[     \int_{[n,n+1)} \e^{\im 2k\pi t }\dd{t} - \int_{(-(n+1),-n]} \e^{\im 2k\pi t }\dd{t}   \right],
\end{eqnarray*}
where the interchange of the integral and the infinite sum in the second equality is guaranteed by the absolute convergence of the series.

When $k=0$, the difference in the pair of brackets is zero, while when $k\in \mathbb{Z}\setminus \{0\}$, each integral inside the pair of brackets is zero. Therefore we have
$$
\langle  \sinc_a, (\mathcal{H}\sinc_a)(\cdot-2k\pi)\rangle =0
$$
for any $k\in \mathbb{Z}$.

\end{proof}

\section{Bedrosian subspace of the Hilbert transform}

In this section,  we pay attention to the  set
\begin{equation}\label{eqn:subspace-S}
 \mathcal{S}_{a}:=\{f: f\in L^2(\mathbb{R}),\
\mathcal{H}(f \cos {\theta_a}(\cdot) )=f \sin {\theta_a}(\cdot )   \}.
\end{equation}
It is clear that the set $\mathcal{S}_{ a}$ is a subspace of
$L^2(\mathbb{R})$ due to the linearity of the Hilbert transform.  The subspace $\mathcal{S}_a$ shall be called the {\em Bedrosian subspace} of the Hilbert transform.  We first make a simple observation that we shall need frequently later.
\begin{lemma}\label{lem:anal}
Equation \eqref{eq:bedrosian-phase-Qian} is true if and only if for $t\in \mathbb{R}$,
\begin{equation}\label{eqn:analyticsignal0}
{\cal
H}\left(\rho(\cdot)\e^{\im\phi\cdot}\right)(t)=-\im\rho(t)\e^{\im\phi
t}.
\end{equation}
\end{lemma}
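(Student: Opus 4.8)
The plan is to prove the equivalence by moving to the analytic-signal formulation, exploiting the two elementary properties of the Hilbert transform that $\mathcal{H}$ is real-linear and that $\mathcal{H}(\rho\sin\phi(\cdot))=-\rho\cos\phi(\cdot)+ \text{(something)}$; more precisely, the cleanest route is to split the complex exponential into its real and imaginary parts and use linearity. Write $\rho(\cdot)\e^{\im\phi(\cdot)}=\rho(\cdot)\cos\phi(\cdot)+\im\,\rho(\cdot)\sin\phi(\cdot)$. Since $\rho$ is real-valued and $\mathcal{H}$ commutes with multiplication by the imaginary constant $\im$ (because $\mathcal{H}$ is real-linear and the kernel is real), we get
\begin{equation*}
\mathcal{H}\bigl(\rho(\cdot)\e^{\im\phi(\cdot)}\bigr)(t)=\mathcal{H}\bigl(\rho(\cdot)\cos\phi(\cdot)\bigr)(t)+\im\,\mathcal{H}\bigl(\rho(\cdot)\sin\phi(\cdot)\bigr)(t).
\end{equation*}

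Next I would handle the second term on the right. The key auxiliary fact is that if \eqref{eq:bedrosian-phase-Qian} holds, i.e. $\mathcal{H}(\rho\cos\phi(\cdot))=\rho\sin\phi(\cdot)$, then applying $\mathcal{H}$ again and using the anti-involution identity \eqref{eqn:Hil1}, $\mathcal{H}^2=-\mathcal{I}$, gives $\mathcal{H}(\rho\sin\phi(\cdot))=\mathcal{H}^2(\rho\cos\phi(\cdot))=-\rho\cos\phi(\cdot)$. Substituting both relations into the displayed decomposition yields
\begin{equation*}
\mathcal{H}\bigl(\rho(\cdot)\e^{\im\phi(\cdot)}\bigr)(t)=\rho(t)\sin\phi(t)+\im\bigl(-\rho(t)\cos\phi(t)\bigr)=-\im\bigl(\rho(t)\cos\phi(t)+\im\rho(t)\sin\phi(t)\bigr)=-\im\,\rho(t)\e^{\im\phi(t)},
\end{equation*}
which is exactly \eqref{eqn:analyticsignal0}. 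This establishes the forward direction.

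For the converse, suppose \eqref{eqn:analyticsignal0} holds. Taking real parts of both sides and again using that $\mathcal{H}$ maps real-valued functions to real-valued functions and is real-linear, the left side becomes $\mathcal{H}(\rho\cos\phi(\cdot))(t)=\operatorname{Re}\mathcal{H}(\rho\,\e^{\im\phi(\cdot)})(t)$, while the right side is $\operatorname{Re}(-\im\rho(t)\e^{\im\phi(t)})=\rho(t)\sin\phi(t)$. Equating the two recovers \eqref{eq:bedrosian-phase-Qian}. (Taking imaginary parts gives the companion relation $\mathcal{H}(\rho\sin\phi(\cdot))=-\rho\cos\phi(\cdot)$, which is consistent and not needed separately.)

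There is no serious obstacle here; the only point requiring a word of care is the justification that $\mathcal{H}$ commutes with multiplication by $\im$ and preserves the real/imaginary decomposition — this is immediate from the definition of $\mathcal{H}$ as a principal-value integral against the real kernel $1/(\pi(t-x))$, so that for $g=u+\im v$ with $u,v$ real one has $\mathcal{H}g=\mathcal{H}u+\im\mathcal{H}v$ with $\mathcal{H}u,\mathcal{H}v$ real. One should also note that $\phi=\theta_a$ is real-valued, so $\e^{\im\phi}$ has the stated real and imaginary parts. With these remarks in place the proof is just the two short computations above, in each direction, so I would present it compactly rather than belabour it.
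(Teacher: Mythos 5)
Your proof is correct and follows essentially the same route as the paper: apply $\mathcal{H}$ to \eqref{eq:bedrosian-phase-Qian}, use the anti-involution identity $\mathcal{H}^2=-\mathcal{I}$ to obtain $\mathcal{H}(\rho\sin\phi(\cdot))=-\rho\cos\phi(\cdot)$, and combine the two real relations into the single complex equation \eqref{eqn:analyticsignal0}. Your treatment of the converse by taking real parts is slightly more explicit than the paper's, which simply asserts the equivalence, but the substance is identical.
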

\begin{proof}
Applying the Hilbert transform $\mathcal{H}$ to both sides of equation \eqref{eq:bedrosian-phase-Qian} and  utilizing  equation \eqref{eqn:Hil1}  yields
\begin{equation}\label{eqn:hilberteqn}
\mathcal{H}( \rho(\cdot)\sin(\phi \cdot))(t)=-\rho(t)\cos(\phi t).
\end{equation}
Combining equations
(\ref{eq:bedrosian-phase-Qian}) and \eqref{eqn:hilberteqn} produces equation \eqref{eqn:analyticsignal0}.
Thus equation \eqref{eq:bedrosian-phase-Qian} is equivalent to equation \eqref{eqn:analyticsignal0}.
\end{proof}

Lemmas  \ref{lemma:fourier-transform-product}--\ref{lem:fouriere} are  several technical lemmas we need in the sequel.
\begin{lemma}\label{lemma:fourier-transform-product}
Suppose that $f\in L^2(\mathbb R)$ and $g\in L^2(-\frac{\tau}{
2},\frac{\tau}{ 2})$ is a $\tau$-periodic function having Fourier series
$g(t)=\displaystyle\sum_{k\in\mathbb Z}c_k\e^{\im k\frac{2\pi}{\tau}t}$
with $c_k =\frac{1}{\tau}\displaystyle\int_{\left(-\frac{\tau}{
2}, \frac{\tau}{2}\right)}g(t)\e^{-\im\frac{2\pi}{\tau}kt}\dd{t}$. Moreover,  the series
$\displaystyle\sum_{k\in\mathbb Z} c_k$ is
absolutely convergent. Then $f g\in L^2(\mathbb R)$ and its Fourier
transform at $\xi \in \mathbb{R}$ is given by
$$
\left(f g\right)^\wedge(\xi)=\sum_{k\in\mathbb Z}c_k\hat
f(\xi-\frac{2\pi}{\tau}k).
$$
\end{lemma}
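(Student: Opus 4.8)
The plan is to compute the Fourier transform of the product $fg$ directly by substituting the Fourier series of $g$ and interchanging summation with the Fourier integral, then justifying the interchange by dominated convergence. First I would write $g(t)=\sum_{k\in\mathbb{Z}}c_k\e^{\im k\frac{2\pi}{\tau}t}$ and note that the partial sums converge uniformly to $g$ on $\mathbb{R}$: since $\sum_k|c_k|<\infty$, the series is uniformly Cauchy, and its uniform limit must agree with $g$ almost everywhere (by $L^2$-convergence of Fourier series on a period), hence everywhere after adjusting $g$ on a null set. In particular $g$ is essentially bounded with $\|g\|_\infty\le\sum_k|c_k|$, so $fg\in L^2(\mathbb{R})$ because $f\in L^2(\mathbb{R})$.

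Next I would set $S_N(t):=\sum_{|k|\le N}c_k\e^{\im k\frac{2\pi}{\tau}t}$ and consider $f S_N$. Each term $c_k f(t)\e^{\im k\frac{2\pi}{\tau}t}$ has Fourier transform $c_k\hat f(\xi-\frac{2\pi}{\tau}k)$ by the elementary modulation rule for the Fourier transform on $L^2(\mathbb{R})$, so by linearity
\begin{equation*}
(f S_N)^\wedge(\xi)=\sum_{|k|\le N}c_k\hat f\Bigl(\xi-\tfrac{2\pi}{\tau}k\Bigr).
\end{equation*}
I would then argue that $f S_N\to fg$ in $L^2(\mathbb{R})$: indeed $\|fg-fS_N\|_2\le\|g-S_N\|_\infty\|f\|_2\to 0$ by the uniform convergence established above. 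Since the Fourier transform is an isometry on $L^2(\mathbb{R})$, it follows that $(fS_N)^\wedge\to(fg)^\wedge$ in $L^2(\mathbb{R})$. On the other hand, the right-hand side $\sum_{|k|\le N}c_k\hat f(\cdot-\frac{2\pi}{\tau}k)$ converges in $L^2(\mathbb{R})$ to $\sum_{k\in\mathbb{Z}}c_k\hat f(\cdot-\frac{2\pi}{\tau}k)$, because the tail satisfies $\bigl\|\sum_{|k|>N}c_k\hat f(\cdot-\frac{2\pi}{\tau}k)\bigr\|_2\le\sum_{|k|>N}|c_k|\,\|\hat f\|_2\to 0$. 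By uniqueness of $L^2$-limits the two limits coincide, which gives the claimed formula (as an identity in $L^2$, hence for a.e.\ $\xi$, and pointwise once the right side is interpreted as its everywhere-defined absolutely convergent sum).

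The only delicate point is the passage from "$\sum|c_k|<\infty$" to "$g$ equals its Fourier series pointwise and is bounded"; the rest is the standard modulation identity plus two easy $L^2$-limit estimates controlled by $\|f\|_2\sum_k|c_k|$. I expect no real obstacle beyond being careful that the hypothesis is on $\sum_k c_k$ being absolutely convergent, i.e.\ $\sum_k|c_k|<\infty$, which is exactly what makes both the uniform convergence of the Fourier series of $g$ and the $L^2$-convergence of the translated copies of $\hat f$ work.
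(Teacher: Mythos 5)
Your proof is correct and follows essentially the same route as the paper's: expand $g$ in its Fourier series, apply the modulation rule termwise, and interchange the sum with the Fourier transform using $\sum_k|c_k|<\infty$. Your version is in fact more careful than the paper's on two points — you derive the boundedness of $g$ from the absolute summability of the coefficients (the paper attributes it, incorrectly, to $g\in L^2$ on a period) and you justify the interchange via $L^2$-limits of partial sums rather than integrating the series directly.
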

\begin{proof}
The fact that $g\in L^2(-\frac{\tau}{
2},\frac{\tau}{2})$ implies that $|g(t)|\le c $, for a.e. $t\in \mathbb{R}$ and some constant $c$. Thus $\int_{\mathbb{R}}f^2(t)g^2(t)\dd{t}\le c^2\int_{\mathbb{R}}f^2(t)\dd{t}<\infty$. That is, $f g\in L^2(\mathbb R)$.   By the definition \eqref{eqn:fourier} of the Fourier transform, we have
\begin{eqnarray*}
( fg )^\wedge(\xi)&=&\int_{\mathbb{R}} f(t)\left(\sum_{k\in\mathbb{Z}}c_k \e^{\im k \frac{2\pi}{\tau}t } \right)\e^{-\im t\xi} \dd{t}\\
&= & \sum_{k\in\mathbb{Z}}c_k\int_{\mathbb{R}} f(t)\e^{-\im t(\xi - k \frac{2\pi}{\tau}  )} \dd{t}\\
&=& \sum_{k\in\mathbb Z}c_k\hat
f(\xi-\frac{2\pi}{\tau}k),
\end{eqnarray*}
where the interchange of the integral and the infinite sum is guaranteed by the absolute convergence of the series $\displaystyle\sum_{k\in\mathbb Z} c_k$.
\end{proof}

We say that a complex $\tau$-periodic signal $g$ is {\em circularly analytic} if its
Fourier expansion is the one-sided series
$g(t)=\displaystyle\sum_{k\in \mathbb{Z}_+}c_k\e^{\im\frac{2\pi}{\tau}kt},$ $t\in \mathbb{R}$.
\begin{lemma}\label{lemma:equivalent-conditions-real-vs-circularanalytic-realcoefficients}
Suppose that $f\in L^2(\mathbb R)$ is a real-valued function and the
complex function $g$ is $\tau$-periodic and circularly analytic with real
Fourier coefficients $c_k$, $k\in \mathbb{Z}_+$. Then
\begin{equation}\label{eq:product-analytic-real-vs-cirularanalytic}
{\cal H}(fg)(t)=-\im f(t)g(t)
\end{equation} if and only if for  $\mathrm{a.e.}$ $ \xi\in \mathbb{R}_{-}$,
\begin{equation}\label{eq:equivalent-real-vs-circularanalytic-1}
\sum _{k\in \mathbb{Z}_+} c_k\hat f(\xi-\frac{2\pi}{\tau}k)=0.
 \end{equation}
\end{lemma}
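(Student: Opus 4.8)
The plan is to transfer the operator identity \eqref{eq:product-analytic-real-vs-cirularanalytic} to the Fourier side and there reduce it to a support condition on $(fg)^\wedge$. The first step is to expand this Fourier transform. Since $g$ is circularly analytic its Fourier series is one-sided, $g(t)=\sum_{k\in\mathbb{Z}_+}c_k\e^{\im\frac{2\pi}{\tau}kt}$; granting that $\sum_{k\in\mathbb{Z}_+}c_k$ converges absolutely (so that $g$ is bounded and $fg\in L^2(\mathbb{R})$), Lemma \ref{lemma:fourier-transform-product} applies and gives, for a.e.\ $\xi\in\mathbb{R}$,
\begin{equation*}
(fg)^\wedge(\xi)=\sum_{k\in\mathbb{Z}_+}c_k\,\hat f\!\left(\xi-\frac{2\pi}{\tau}k\right),
\end{equation*}
the range of summation being $\mathbb{Z}_+$ precisely because the spectrum of $g$ sits on the non-negative frequencies.

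Next I would apply the Fourier transform to both sides of \eqref{eq:product-analytic-real-vs-cirularanalytic} and use the composition formula \eqref{eqn:Hil3}. The left side becomes $-\im\,\mathrm{sgn}(\xi)\,(fg)^\wedge(\xi)$ and the right side becomes $-\im\,(fg)^\wedge(\xi)$; since the Fourier transform is injective on $L^2(\mathbb{R})$, identity \eqref{eq:product-analytic-real-vs-cirularanalytic} is equivalent to
\begin{equation*}
\bigl(\mathrm{sgn}(\xi)-1\bigr)(fg)^\wedge(\xi)=0\qquad\text{for a.e. }\xi\in\mathbb{R}.
\end{equation*}
On $\mathbb{R}_+$ the prefactor $\mathrm{sgn}(\xi)-1$ vanishes identically and $\{\xi=0\}$ is a null set, so this is in turn equivalent to $(fg)^\wedge(\xi)=0$ for a.e.\ $\xi\in\mathbb{R}_-$. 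Substituting the expansion from the first step, this is exactly condition \eqref{eq:equivalent-real-vs-circularanalytic-1}, and since every step above is an equivalence, both implications are obtained simultaneously.

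The computation is short, so the work is mostly bookkeeping rather than any single hard step: one must check that the hypotheses of Lemma \ref{lemma:fourier-transform-product} are in force (notably the absolute summability of $(c_k)$, which also secures $fg\in L^2(\mathbb{R})$ and hence the applicability of \eqref{eqn:Hil3} to $fg$), confirm that the $L^2$ Fourier transform produced by that lemma is the same object to which \eqref{eqn:Hil3} refers, and be careful in reading the negative half-line off the relation $(\mathrm{sgn}(\xi)-1)(fg)^\wedge(\xi)=0$, all equalities being understood almost everywhere. I note that the hypotheses that $f$ be real-valued and that the $c_k$ be real are not actually used in this equivalence; they are carried along because they hold throughout the paper and are what later allows one to connect \eqref{eq:product-analytic-real-vs-cirularanalytic} with the canonical-modulation equation \eqref{eq:bedrosian-phase-Qian} via Lemma \ref{lem:anal}.
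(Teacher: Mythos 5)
Your proposal is correct and follows essentially the same route as the paper: the paper likewise observes that \eqref{eq:product-analytic-real-vs-cirularanalytic} says $fg$ is an analytic signal, hence $(fg)^\wedge$ vanishes a.e.\ on $\mathbb{R}_-$, and then invokes Lemma \ref{lemma:fourier-transform-product} to rewrite $(fg)^\wedge$ as the shifted sum; you merely make the ``analytic signal $\Leftrightarrow$ no negative spectrum'' step explicit via the multiplier identity \eqref{eqn:Hil3}. Your side remarks (that absolute summability of $(c_k)$ must be assumed for Lemma \ref{lemma:fourier-transform-product} to apply, and that real-valuedness of $f$ and of the $c_k$ is not used here) are accurate.
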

\begin{proof}
Equation (\ref{eq:product-analytic-real-vs-cirularanalytic})
indicates that $fg$ is an analytic signal. This is equivalent to say
that $fg$ has no negative frequency, that is,
$(fg)^\wedge(\xi)=0,\quad \xi\in \mathbb{R}_{-}.$ By Lemma
\ref{lemma:fourier-transform-product} and the fact that $g$ is circularly analytic, we obtain equation \eqref{eq:equivalent-real-vs-circularanalytic-1}.
\end{proof}

\begin{lemma}\label{lem:fouriere}
The $2\pi$-periodic function $g:=\e^{\im\cta(\cdot)}$ is   circularly analytic and the Fourier expansion of $ \e^{\im\cta(\cdot)}$ at $t\in\mathbb R$ is given by
\begin{equation}\label{eq:Fourier-expansion-theta-a}
\e^{\im\cta(t)}=-a+\sum _{k\in \mathbb{N}}(1-a^2)a^{k-1}\e^{\im kt}
.
\end{equation}
\end{lemma}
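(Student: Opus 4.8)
The plan is to establish the Fourier expansion \eqref{eq:Fourier-expansion-theta-a} by directly expanding the Blaschke product $B_a(\e^{\im t})$ as a power series in $z = \e^{\im t}$, using the defining relation \eqref{eqn:phase}, namely $\e^{\im\theta_a(t)} = B_a(\e^{\im t})$. From \eqref{eqn:Blaschke-product} we have $B_a(z) = \frac{z-a}{1-az}$, and for $|z| \le 1$ with $a \in (-1,1)$ we may expand $\frac{1}{1-az} = \sum_{n\in\mathbb{Z}_+} a^n z^n$, which converges absolutely since $|az| \le |a| < 1$. Thus $B_a(z) = (z-a)\sum_{n\in\mathbb{Z}_+} a^n z^n = \sum_{n\in\mathbb{Z}_+} a^n z^{n+1} - a\sum_{n\in\mathbb{Z}_+} a^n z^n$. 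Reindexing the first sum with $k = n+1$ gives $\sum_{k\in\mathbb{N}} a^{k-1} z^k$, and peeling off the $n=0$ term of the second sum gives $-a - \sum_{k\in\mathbb{N}} a^{k+1} z^k$. Combining the two sums over $k \in \mathbb{N}$ yields coefficient $a^{k-1} - a^{k+1} = a^{k-1}(1-a^2)$, so $B_a(z) = -a + \sum_{k\in\mathbb{N}} (1-a^2) a^{k-1} z^k$. Setting $z = \e^{\im t}$ gives exactly \eqref{eq:Fourier-expansion-theta-a}.

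To conclude that $g = \e^{\im\theta_a(\cdot)}$ is circularly analytic and that this power series is genuinely its Fourier expansion, I would note that the series $\sum_{k\in\mathbb{N}} (1-a^2) a^{k-1}$ is absolutely convergent (it is a geometric-type series with ratio $|a| < 1$, summing to $1+a$ together with the constant term $-a$, consistent with $|\e^{\im\theta_a}| = 1$), so the series on the right of \eqref{eq:Fourier-expansion-theta-a} converges uniformly in $t$ and its partial sums converge in $L^2(-\pi,\pi)$; by uniqueness of Fourier coefficients, the displayed series is the Fourier series of $g$. Since only nonnegative frequencies $\e^{\im kt}$, $k \in \mathbb{Z}_+$, appear (with real coefficients $c_0 = -a$ and $c_k = (1-a^2)a^{k-1}$ for $k \ge 1$), $g$ is circularly analytic in the sense defined just before Lemma \ref{lemma:equivalent-conditions-real-vs-circularanalytic-realcoefficients}, with $\tau = 2\pi$.

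I do not anticipate a genuine obstacle here; the only point requiring a small amount of care is justifying that the formal power-series manipulation is valid on the boundary circle $|z|=1$ rather than merely on the open disk $\Delta$. This is handled by the observation that $\frac{1}{1-az}$ extends continuously (indeed analytically) past the closed unit disk since its only pole $z = 1/a$ satisfies $|1/a| > 1$, so the geometric series converges absolutely and uniformly on $\{|z| \le 1\}$; all rearrangements and the reindexing are therefore legitimate. One can also cross-check the result against \eqref{eqn:sina} and \eqref{eqn:costhetaa}: taking the real and imaginary parts of $-a + \sum_{k\in\mathbb{N}}(1-a^2)a^{k-1}\e^{\im kt}$ and summing the resulting geometric series should reproduce $\cos\theta_a(t) = \frac{(1+a^2)\cos t - 2a}{1-2a\cos t + a^2}$ and $\sin\theta_a(t) = \frac{(1-a^2)\sin t}{1-2a\cos t + a^2}$, which serves as an independent verification of the coefficients.
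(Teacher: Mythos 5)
Your proof is correct, but it takes a different route from the paper's. You obtain the expansion by multiplying out $B_a(z)=(z-a)\sum_{n\in\mathbb{Z}_+}a^nz^n$, using the geometric series for $1/(1-az)$, which converges absolutely and uniformly on the closed disk because the only pole $1/a$ lies outside it; reindexing then gives the coefficients $-a$ and $(1-a^2)a^{k-1}$ directly, and uniform convergence on $|z|=1$ identifies the series as the Fourier series of $g$. The paper instead computes each Fourier coefficient $c_k=\frac{1}{2\pi}\int_{(-\pi,\pi)}\frac{\e^{\im t}-a}{1-a\e^{\im t}}\e^{-\im kt}\dd{t}$ by contour integration: Cauchy's theorem shows $c_k=0$ for $k<0$ (the integrand $\frac{(z-a)z^{|k|-1}}{1-az}$ is analytic in $\Delta$), and the residue theorem at $z=0$ yields $c_0=-a$ and $c_k=(1-a^2)a^{k-1}$ for $k\in\mathbb{N}$. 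Your argument is more elementary --- it needs only the geometric series and its uniform convergence, with no residue calculus --- and it delivers all coefficients at once together with the convergence statement needed to call the result a Fourier expansion. The paper's residue computation makes the vanishing of the negative-index coefficients appear as a direct consequence of the analyticity of $B_a$ on $\Delta$, which is the conceptual reason the boundary function is circularly analytic; but both derivations are complete and arrive at the same formula. Your cross-check against \eqref{eqn:sina} and \eqref{eqn:costhetaa} is a nice additional sanity check not present in the paper.
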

\begin{proof}
Recalling equation \eqref{eqn:phase}, the Fourier coefficients $c_k$, $
k\in\mathbb Z $, of $g$ is given by
$$
c_k=\frac{1}{2\pi}\int_{(-\pi,\pi)}\frac{\e^{\im t}-a}{1-a\e^{\im t}}\e^{-\im kt}\dd{t}.
$$
Let $\mathrm{Rez}(f(z);c)$ indicate the residue of the function $f$ at the point $c\in \mathbb{C}$. Denote the boundary of the unit disc $\Delta$ by $\partial \Delta$.
For $k=0$, we have $c_0=\frac{1}{2\pi \im }\oint_{\partial \Delta}\frac{z-a}{1-az}\frac{\dd{z}}{z}=\mbox{Res}\left(\frac{z-a}{(1-az)z};0\right)=-a.$
In the case $k$ is a negative integer, by using the Cauchy theorem we get that
\begin{eqnarray*}
c_k =
\frac{1}{2\pi}\int_{(-\pi,\pi)}\frac{\e^{\im t}-a}{1-a\e^{\im t}}\e^{\im \vert
k\vert t}\dd{t}=\frac{1}{2\pi \im }\oint_{\partial \Delta}\frac{(z-a)z^{|k|-1}}{1-az}\dd{z}=0,
\end{eqnarray*}
because the integrand is analytic in the unit disk $  \Delta $ as $|a|<1$.
We are left to consider the case of $k\in \mathbb{N}.$  Using the residue theorem and
the formula for $j\in \mathbb{N}$,
$$\frac{\dd{}^j}{\dd{z}^j}\left(\frac{z-a}{1-az}\right)=(1-a^2)a^{j-1}j!(1-az)^{-j-1},\quad $$
leads to when $k\in \mathbb{N}$,
\begin{eqnarray*}
c_k&=&\frac{1}{2\pi \im }\oint_{\partial \Delta}\frac{z-a}{(1-az)z^{k+1}}\dd{z}=\mbox{Res}\left(\frac{z-a}{(1-az)z^{k+1}};0\right)\\
&=& \frac{1}{k!}\lim_{z\to
0}\frac{\dd{}^k}{\dd{z}^k}\left(\frac{z-a}{1-az}\right) =(1-a^2)a^{k-1}.
\end{eqnarray*}
The proof of this lemma is completed.
\end{proof}

Next proposition indicates that the subspace $\mathcal{S}_a$ is {\em invariant} under the Hilbert transform.
\begin{prop} \label{prop:Hf}
Suppose that $\rho\in \mathcal{S}_{a}$, then $\mathcal{H}\rho \in
\mathcal{S}_{a} $.
\end{prop}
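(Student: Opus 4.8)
The plan is to translate membership in $\mathcal{S}_a$ into a purely spectral condition on $\hat\rho$ over the negative half-line, and then to check that this condition is unaffected by passing from $\rho$ to $\mathcal{H}\rho$. First I would record that $\mathcal{H}\rho$ is again a real-valued element of $L^2(\mathbb{R})$ (if one wishes to allow complex $\rho$, split it into its real and imaginary parts, each of which separately lies in $\mathcal{S}_a$ by linearity of $\mathcal{H}$), so all the lemmas of this section apply to $\mathcal{H}\rho$ as well. By Lemma \ref{lem:anal}, $\rho\in\mathcal{S}_a$ is equivalent to $\mathcal{H}\bigl(\rho\,\e^{\im\cta(\cdot)}\bigr)=-\im\rho\,\e^{\im\cta(\cdot)}$. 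By Lemma \ref{lem:fouriere} the $2\pi$-periodic function $g:=\e^{\im\cta(\cdot)}$ is circularly analytic with real Fourier coefficients $c_0=-a$ and $c_k=(1-a^2)a^{k-1}$ for $k\in\mathbb{N}$, and $\sum_{k\in\mathbb{Z}_+}|c_k|<\infty$ since $|a|<1$; hence Lemmas \ref{lemma:fourier-transform-product} and \ref{lemma:equivalent-conditions-real-vs-circularanalytic-realcoefficients} apply with $\tau=2\pi$, and $\rho\in\mathcal{S}_a$ if and only if
\[
\sum_{k\in\mathbb{Z}_+}c_k\,\hat\rho(\xi-k)=0\qquad\text{for a.e. }\xi\in\mathbb{R}_-.
\]

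Next I would compute the spectrum of $(\mathcal{H}\rho)\,g$ on $\mathbb{R}_-$. By Lemma \ref{lemma:fourier-transform-product} together with the identity \eqref{eqn:Hil3},
\[
\bigl((\mathcal{H}\rho)\,g\bigr)^\wedge(\xi)=\sum_{k\in\mathbb{Z}_+}c_k\,\widehat{\mathcal{H}\rho}(\xi-k)=\sum_{k\in\mathbb{Z}_+}c_k\,(-\im)\,\mathrm{sgn}(\xi-k)\,\hat\rho(\xi-k).
\]
The point on which everything turns is that $g$ is \emph{circularly} analytic, so only nonnegative indices $k\in\mathbb{Z}_+$ occur: for a.e. $\xi\in\mathbb{R}_-$ each argument $\xi-k$ is strictly negative, whence $\mathrm{sgn}(\xi-k)=-1$. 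Therefore, for a.e. $\xi\in\mathbb{R}_-$,
\[
\bigl((\mathcal{H}\rho)\,g\bigr)^\wedge(\xi)=\im\sum_{k\in\mathbb{Z}_+}c_k\,\hat\rho(\xi-k)=\im\,(\rho\,g)^\wedge(\xi)=0,
\]
the last step by the characterization just obtained. Thus $((\mathcal{H}\rho)\,g)^\wedge$ vanishes a.e. on $\mathbb{R}_-$, so by the converse half of Lemma \ref{lemma:equivalent-conditions-real-vs-circularanalytic-realcoefficients} applied to the real-valued function $\mathcal{H}\rho$ we get $\mathcal{H}\bigl((\mathcal{H}\rho)\,g\bigr)=-\im(\mathcal{H}\rho)\,g$, and another appeal to Lemma \ref{lem:anal} then yields $\mathcal{H}\bigl((\mathcal{H}\rho)\cos\cta(\cdot)\bigr)=(\mathcal{H}\rho)\sin\cta(\cdot)$, i.e.\ $\mathcal{H}\rho\in\mathcal{S}_a$.

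There is no deep obstacle in this argument; the only points requiring care are the sign bookkeeping in the two displays above — in particular noting that the Lebesgue-null set of $\xi$ for which some $\xi-k$ equals $0$ (where $\mathrm{sgn}$ vanishes) may be discarded — and being explicit that the one-sidedness of the Fourier expansion of $\e^{\im\cta}$ is exactly what guarantees $\xi-k<0$ for every $k\in\mathbb{Z}_+$ once $\xi<0$. In fact the computation proves slightly more than is asked, namely $((\mathcal{H}\rho)\,g)^\wedge=\im\,(\rho\,g)^\wedge$ on $\mathbb{R}_-$, so that $\mathcal{H}$ restricts to a bijection of $\mathcal{S}_a$ onto itself.
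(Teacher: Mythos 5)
Your proof is correct and follows essentially the same route as the paper: both arguments rest on Lemmas \ref{lem:anal}, \ref{lemma:fourier-transform-product}, \ref{lemma:equivalent-conditions-real-vs-circularanalytic-realcoefficients} and \ref{lem:fouriere}, together with the observation that the circular analyticity of $\e^{\im\cta}$ forces every shifted argument $\xi-k$, $k\in\mathbb{Z}_+$, to remain negative once $\xi<0$. The only cosmetic difference is that the paper packages $\rho$ and $\mathcal{H}\rho$ into the analytic signal $\rho+\im\mathcal{H}\rho$, whose spectrum vanishes on $\mathbb{R}_-$ term by term, and then subtracts the hypothesis at the end, whereas you apply the multiplier identity \eqref{eqn:Hil3} to each term $\widehat{\mathcal{H}\rho}(\xi-k)$ directly; the two computations are interchangeable.
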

\begin{proof}
We first claim that the complex-valued signal
$(\rho+\mathrm{i}\mathcal{H}\rho)\mathrm{e}^{\mathrm{i}{\theta_a}}$ is
an analytic signal. Indeed, by Lemma \ref{lem:fouriere}   the signal $\e^{\im \theta_a} $  is circularly analytic, and we write  $\e^{\im \theta_a (t)}=\sum_{k\in \mathbb{Z}_+} g_k \e^{\im k t}$. Then by Lemma \ref{lemma:fourier-transform-product}, we have for $\xi\in \mathbb{R}$,
\begin{eqnarray}
\mathcal{F}\left( (\rho+\mathrm{i}\mathcal{H}\rho)\mathrm{e}^{\mathrm{i}{\theta_a}} \right)(\xi)&=& \sum_{k\in \mathbb{Z}_+} g_k \mathcal{F}\left(\rho+\im \mathcal{H}\rho \right)(\xi -k) . \label{eqn:fouriersinca}
\end{eqnarray}
Note that $ \left(\rho+\im \mathcal{H}\rho \right)$ is an analytic signal thus it has no negative spectrum, that is $\mathcal{F}\left(\rho+\im \mathcal{H}\rho \right)(\xi)=0$ when $\xi\in \mathbb{R}_{-}$.
Consequently,
when $\xi\in \mathbb{R}_{-}$,
$$
\sum_{k\in \mathbb{Z}_+} g_k \mathcal{F}\left(\rho+\im \mathcal{H}\rho \right)(\xi -k) =0,
$$
which implies
by Lemma \ref{lemma:equivalent-conditions-real-vs-circularanalytic-realcoefficients} that
$$
\mathcal{H}\left[ (\rho+\mathrm{i}
\mathcal{H}\rho)\mathrm{e}^{\mathrm{i}{\theta_a}} \right] =-\mathrm{i}
(\rho+\mathrm{i}\mathcal{H}\rho)\mathrm{e}^{\mathrm{i}{\theta_a}}.
$$
Simplifying both sides leads to
$\mathcal H [(\mathcal H \rho)
\mathrm{e}^{\mathrm{i}{\theta_a}}]=-\mathrm{i}[(\mathcal H \rho)
\mathrm{e}^{\mathrm{i}{\theta_a}}]$, which indicates that  $\mathcal{H}\rho\in \mathcal{S}_a$.
\end{proof}

The next lemma indicates that any real-valued function $\rho\in \mathcal{S}_a$   is completely
determined by its spectrum  $\hat\rho\chi_{(-1,1)}$ on the interval $(-1,1)$.
\begin{lemma}\label{lemma:characterization-1}
The real-valued signal $\rho\in \mathcal{S}_a$ if and only if for $\mathrm{a.e.}$ $\xi\in\mathbb
R\setminus\{0\}$, $ n\in\mathbb Z_+$,
\begin{equation}\label{eq:equivalent-conditions-1}
\hat\rho\left(\xi+ \mathrm{sgn}(\xi) n \right)=a^{
n }\hat\rho(\xi).
\end{equation}
\end{lemma}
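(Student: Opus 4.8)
The plan is to push the membership $\rho\in\mathcal S_a$ through the lemmas already established and then solve the resulting infinite linear system for $\hat\rho$. First, by the definition \eqref{eqn:subspace-S} of $\mathcal S_a$ and by Lemma~\ref{lem:anal} (applied with $\phi=\theta_a$), $\rho\in\mathcal S_a$ is equivalent to $\mathcal H(\rho\,\e^{\im\theta_a})=-\im\,\rho\,\e^{\im\theta_a}$, i.e. to the statement that $\rho\,\e^{\im\theta_a}$ carries no negative spectrum. By Lemma~\ref{lem:fouriere} the $2\pi$-periodic factor $\e^{\im\theta_a}$ is circularly analytic with the \emph{real} Fourier coefficients $c_0=-a$ and $c_k=(1-a^2)a^{k-1}$, $k\in\mathbb N$, and $\sum_{k\in\mathbb Z_+}|c_k|=1+2|a|<\infty$; hence Lemma~\ref{lemma:equivalent-conditions-real-vs-circularanalytic-realcoefficients} applies with $\tau=2\pi$, $f=\rho$, $g=\e^{\im\theta_a}$ and tells us that $\rho\in\mathcal S_a$ if and only if
\begin{equation}\label{eq:plan-star}
-a\,\hat\rho(\xi)+(1-a^2)\sum_{k\in\mathbb N}a^{k-1}\hat\rho(\xi-k)=0\qquad\text{for a.e. }\xi\in\mathbb R_-.
\end{equation}
Everything then reduces to showing that \eqref{eq:plan-star} is equivalent to the recursion \eqref{eq:equivalent-conditions-1}.

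One implication is immediate: assuming \eqref{eq:equivalent-conditions-1}, for a.e. $\xi<0$ we have $\hat\rho(\xi-k)=a^k\hat\rho(\xi)$, so by the geometric series $\sum_{k\in\mathbb N}a^{2k-1}=a/(1-a^2)$ the left-hand side of \eqref{eq:plan-star} becomes $\hat\rho(\xi)\bigl(-a+(1-a^2)\sum_{k\in\mathbb N}a^{2k-1}\bigr)=\hat\rho(\xi)(-a+a)=0$. For the converse I would begin with a Tonelli argument: since $\int_{(-1,0)}\sum_{j\in\mathbb Z_+}|\hat\rho(\xi-j)|^2\dd{\xi}=\int_{\mathbb R_-}|\hat\rho(\xi)|^2\dd{\xi}<\infty$, for a.e. $\xi\in(-1,0)$ the sequence $\bigl(\hat\rho(\xi-j)\bigr)_{j\geq0}$ is bounded, so all the series below converge absolutely; moreover this full-measure set of ``good'' base points, shifted by integers, meets a.e. $\eta\in\mathbb R_-$.

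Fixing such a point and writing, for $\eta<0$, $S(\eta):=\sum_{k\in\mathbb N}a^{k-1}\hat\rho(\eta-k)$, a split-off of the $k=1$ term followed by a reindexing yields the self-similarity identity $S(\eta)=\hat\rho(\eta-1)+a\,S(\eta-1)$, while \eqref{eq:plan-star} reads $a\hat\rho(\eta)=(1-a^2)S(\eta)$ at every negative $\eta$. Substituting the latter relation, evaluated at $\eta$ and at $\eta-1$, into the self-similarity identity and clearing the denominator collapses everything to $\hat\rho(\eta-1)=a\hat\rho(\eta)$ for a.e. $\eta<0$; iterating in the integer shift gives $\hat\rho(\eta-n)=a^n\hat\rho(\eta)$ for every $n\in\mathbb Z_+$ and a.e. $\eta<0$, which is \eqref{eq:equivalent-conditions-1} restricted to $\mathbb R_-$. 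To reach $\mathbb R_+$ I would use that $\rho$ is real-valued, so $\hat\rho(-\xi)=\overline{\hat\rho(\xi)}$: for a.e. $\xi>0$ and $n\in\mathbb Z_+$, applying the relation just proved at the negative point $-\xi$ and taking complex conjugates (recall that $a$ is real) gives $\hat\rho(\xi+n)=\overline{\hat\rho(-\xi-n)}=\overline{a^n\hat\rho(-\xi)}=a^n\hat\rho(\xi)$, i.e. \eqref{eq:equivalent-conditions-1} on $\mathbb R_+$ (the case $n=0$ being trivial). Together the two cases give \eqref{eq:equivalent-conditions-1} for a.e. $\xi\in\mathbb R\setminus\{0\}$.

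I expect the converse direction to be the only real obstacle: the termwise manipulation of the infinite sums that define $S$ has to be justified, which is precisely why the full-measure set of admissible $\xi$ is isolated via Tonelli at the outset; and one must keep in mind that Lemma~\ref{lemma:equivalent-conditions-real-vs-circularanalytic-realcoefficients} only controls the spectrum on $\mathbb R_-$, so the hypothesis that $\rho$ is real-valued is genuinely needed to obtain the symmetric relation \eqref{eq:equivalent-conditions-1} on all of $\mathbb R\setminus\{0\}$.
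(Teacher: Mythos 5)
Your proposal is correct and follows essentially the same route as the paper: reduce via Lemmas \ref{lem:anal}, \ref{lem:fouriere} and \ref{lemma:equivalent-conditions-real-vs-circularanalytic-realcoefficients} to the condition $a\hat\rho(\xi)=(1-a^2)\sum_{k\in\mathbb N}a^{k-1}\hat\rho(\xi-k)$ on $\mathbb R_-$, collapse it to the one-step recursion $\hat\rho(\xi-1)=a\hat\rho(\xi)$ by the same split-off-and-shift algebra (your identity $S(\eta)=\hat\rho(\eta-1)+aS(\eta-1)$ is exactly the paper's pair of displayed equations), iterate, and transfer to $\mathbb R_+$ by Hermitian symmetry. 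Your two small refinements --- the Tonelli argument justifying the termwise manipulations, and conjugating the conclusion rather than the equation for the positive-frequency half --- tighten details the paper leaves implicit but do not change the proof.
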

\begin{proof}
By Lemma \ref{lem:anal} $\rho\in \mathcal{S}_a$ is equivalent to for $t\in \mathbb{R}$,
\begin{equation}
\mathcal{H}\left(\rho(\cdot)\e^{\im \theta_a(\cdot)}  \right)(t)=-\im \rho(t)\e^{\im \theta_a(t)},
\end{equation}
which is equivalent to saying that  the signal $\rho(\cdot)\e^{\im \theta_a(\cdot)}  $ is an analytic signal.
Applying Lemmas
\ref{lemma:equivalent-conditions-real-vs-circularanalytic-realcoefficients} and  \ref{lem:fouriere} to the signal $\rho(\cdot)\e^{\im \theta_a(\cdot)}  $
yields that $\rho\in \mathcal{S}_a$ holds if and only if for $
\mathrm{a.e.}$ $\xi\in \mathbb{R}_{-}$,
\begin{equation}\label{eq:amplitude-condition-negative}
a\hat\rho(\xi)=(1-a^2)\sum_{k\in \mathbb{N}} a^{k-1}\hat\rho(\xi-k).
\end{equation}

 In equation \eqref{eq:amplitude-condition-negative}, separating the first term from the rest in the sum on the right-hand side,  we have
\begin{eqnarray}
a\hat\rho(\xi)&=& (1-a^2)\left( \hat{\rho}(\xi-1)+ \sum_{k\in 1+\mathbb{N}} a^{k-1}\hat\rho(\xi-k)  \right) \nonumber\\
&=& \hat{\rho}(\xi-1)-a^2 \hat{\rho}(\xi-1)+ (1-a^2)\sum_{k\in 1+\mathbb{N}} a^{k-1}\hat\rho(\xi-k) . \label{eqn:f1}
\end{eqnarray}
 On the other hand,   multiplying both sides of equation \eqref{eq:amplitude-condition-negative} by $a$ and replacing $\xi$ by $\xi-1$ we also have
\begin{eqnarray}
a^2 \hat\rho(\xi-1)&=& (1-a^2) \sum_{k\in  \mathbb{N}} a^{k}\hat\rho(\xi-1-k) \nonumber \\
&=& (1-a^2) \sum_{k\in 1+ \mathbb{N}} a^{k-1}\hat\rho(\xi-k) . \label{eqn:f2}
\end{eqnarray}

Combining equations \eqref{eqn:f1} and \eqref{eqn:f2} deduces that for $\mathrm{a.e.}$ $ \xi\in \mathbb{R}_{-}$,
$$
\hat\rho(\xi-1)=a\hat\rho(\xi).
$$
By induction, we can get that for $n\in{\mathbb Z}_+$ and  $ \mathrm{a.e.}$ $\xi\in \mathbb{R}_{-}$,
\begin{equation}\label{eq:condition-shift-negative}
\hat\rho(\xi-n)=a^n\hat\rho(\xi).
\end{equation}
Reversing the above calculations reveals  that   condition
(\ref{eq:condition-shift-negative}) is also sufficient for
(\ref{eq:amplitude-condition-negative}).

Taking the conjugate of  both sides of    equation
(\ref{eq:amplitude-condition-negative}) and using the Hermitian property of $\hat{\rho}$, we obtain  an equivalent equation
\begin{equation}\label{eq:amplitude-condition-positive}
a\hat\rho(\xi)=(1-a^2)\sum _{k\in \mathbb{N}} a^{k-1}\hat\rho(\xi+k)
\end{equation}
 for $\mathrm{a.e.}$ $ \xi\in \mathbb{R}_{-}$.
A similar discussion  to the case $\xi\in \mathbb{R}_{-}$ gives that equation
(\ref{eq:amplitude-condition-positive}) is equivalent to
\begin{equation}\label{eq:condition-shift-positive}
\hat\rho(\xi+n)=a^n\hat\rho(\xi)
\end{equation}
for $n\in{\mathbb Z}_+$ and  $ \mathrm{a.e.}$ $\xi\in \mathbb{R}_{-}$.
Finally combining equations (\ref{eq:condition-shift-negative}) and
(\ref{eq:condition-shift-positive}) produces
(\ref{eq:equivalent-conditions-1}).
\end{proof}

\begin{prop}\label{lem:Sinc_bedrosian}
$$  \sinc_a\in \mathcal{S}_a\quad \mathrm{and}\quad \mathcal{H}\sinc_a \in \mathcal{S}_a.$$
\end{prop}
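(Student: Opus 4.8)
The plan is to reduce the claim to a single verification. By Proposition \ref{prop:Hf}, the Bedrosian subspace $\mathcal{S}_a$ is invariant under the Hilbert transform, so once we know $\sinc_a\in\mathcal{S}_a$ we immediately get $\mathcal{H}\sinc_a\in\mathcal{S}_a$. Hence it suffices to prove $\sinc_a\in\mathcal{S}_a$, and for this I would apply Lemma \ref{lemma:characterization-1}. Note first that $\sinc_a$ is real-valued (indeed even and infinitely differentiable, by Proposition \ref{prop:sinc}) and lies in $L^2(\mathbb{R})$, so the hypotheses of that lemma are met: we must check that $\widehat{\sinc_a}\bigl(\xi+\mathrm{sgn}(\xi)n\bigr)=a^{n}\widehat{\sinc_a}(\xi)$ for every $n\in\mathbb{Z}_+$ and a.e.\ $\xi\in\mathbb{R}\setminus\{0\}$.

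Next I would bring in the spectral formula from Proposition \ref{prop:sinc}, namely $\widehat{\sinc_a}=\sqrt{\pi/2}\,(1+a)H_a$ with $H_a=\sum_{m\in\mathbb{Z}_+}a^m\chi_{I_m}$ and $I_m=(-(m+1),-m]\cup[m,m+1)$. So the identity to verify collapses to $H_a\bigl(\xi+\mathrm{sgn}(\xi)n\bigr)=a^{n}H_a(\xi)$. I would argue by cases on the sign of $\xi$. If $\xi>0$, then $\xi$ belongs to exactly one block $[m,m+1)$ with $m\in\mathbb{Z}_+$, so $H_a(\xi)=a^m$; since $\mathrm{sgn}(\xi)=1$, the shifted point $\xi+n$ lies in $[m+n,m+n+1)$, whence $H_a(\xi+n)=a^{m+n}=a^{n}H_a(\xi)$. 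Symmetrically, if $\xi<0$ then $\xi\in(-(m+1),-m]$ for some $m$, so $H_a(\xi)=a^m$, and $\xi+\mathrm{sgn}(\xi)n=\xi-n\in(-(m+n+1),-(m+n)]$, giving $H_a(\xi-n)=a^{m+n}=a^{n}H_a(\xi)$ again. This establishes the characterizing condition, so Lemma \ref{lemma:characterization-1} yields $\sinc_a\in\mathcal{S}_a$, and the proposition follows.

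I do not expect a genuine obstacle here: the analytic content has already been packaged into Lemma \ref{lemma:characterization-1} and Proposition \ref{prop:Hf}, and what remains is bookkeeping. The only points needing a little care are making sure the half-open intervals $I_m$ partition $\mathbb{R}$ away from the integers so that $H_a$ is unambiguously defined there, and observing that the integer points are a null set so the ``a.e.'' qualifier in the characterization is unaffected by them. An alternative, more pedestrian route would be to verify $\mathcal{H}(\sinc_a\cos\theta_a)=\sinc_a\sin\theta_a$ directly using \eqref{eqn:sina}, \eqref{eqn:costhetaa} and \eqref{eqn:sinca}, but the spectral argument above is shorter and reuses machinery already in place.
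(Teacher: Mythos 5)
Your proof is correct and follows essentially the same route as the paper: reduce to $\sinc_a\in\mathcal{S}_a$ via Proposition \ref{prop:Hf}, then verify the spectral shift condition of Lemma \ref{lemma:characterization-1} using $\mathcal{F}\sinc_a=\sqrt{\pi/2}\,(1+a)H_a$ and the identity $H_a(\xi+\mathrm{sgn}(\xi)n)=a^{n}H_a(\xi)$. The only difference is that you spell out the case analysis behind that identity, which the paper simply states as an observation.
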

\begin{proof}
By Lemma \ref{prop:Hf}, it suffices to show $\sinc_a\in \mathcal{S}_a$.
Observing the identity $H_a(\cdot+n)=a^{\vert n\vert}H_a(\cdot)$  for a.e.  $\xi\in \mathbb{R}\setminus \{0\}$, $n\in \mathbb{Z}$ and recall equation \eqref{general-sinc-function-f}  we thus have
\begin{eqnarray*}
\mathcal{F}(\sinc_a)(\xi+\vert n\vert \mathrm{sgn}(\xi))&=&a^{\vert
n\vert}\mathcal{F}(\sinc_a)(\xi).
\end{eqnarray*}
Hence by Lemma \ref{lemma:characterization-1}, we conclude that $\sinc_a\in \mathcal{S}_a$.

\end{proof}

\section{An observation--how a linear phase determines the amplitude}
\setcounter{equation}{0}

In this section, we specifically consider the case when the phase $\phi$ in equation (\ref{eq:bedrosian-phase-Qian}) is a linear phase and investigate the representation of the corresponding amplitude. The  result is given in the following theorem. We use the notation $\mathrm{supp}(f)$ for the set of real numbers on which the values $f(x)$ at $x\in \mathbb{R}$ are nonzero.

\begin{theorem}\label{th-linearcase}
Suppose that $\gamma$ is a positive real number and $\rho$ is a non-zero real
signal in $L^2(\mathbb R)$. Then the following equation
\begin{equation}\label{eq:bedrosian-linear-phase}
{\cal
H}\left(\rho(\cdot)\cos(\gamma\cdot)\right)(t)=\rho(t)\sin(\gamma t)
\end{equation}
holds if and only if $\rho$ is bandlimited with
$\mathrm{supp}({\hat\rho})\subset [-\gamma,\gamma].$
\end{theorem}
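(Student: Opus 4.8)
The plan is to reduce \eqref{eq:bedrosian-linear-phase} to a statement about the spectrum of $\rho$ and then exploit the Hermitian symmetry forced by $\rho$ being real-valued. First I would apply Lemma \ref{lem:anal} with $\phi(t)=\gamma t$ to rewrite \eqref{eq:bedrosian-linear-phase} in the equivalent form
$$
\mathcal{H}\bigl(\rho(\cdot)\e^{\im\gamma\cdot}\bigr)(t)=-\im\,\rho(t)\e^{\im\gamma t},
$$
which, exactly as in the proof of Lemma \ref{lemma:characterization-1}, says that $\rho(\cdot)\e^{\im\gamma\cdot}$ is an analytic signal; by \eqref{eqn:Hil3} this is equivalent to its Fourier transform vanishing almost everywhere on $\mathbb{R}_-$. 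One may alternatively invoke Lemma \ref{lemma:equivalent-conditions-real-vs-circularanalytic-realcoefficients} with $\tau=2\pi/\gamma$, the circularly analytic periodic factor being $\e^{\im\gamma\cdot}$ whose only non-zero Fourier coefficient is $c_1=1$.

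Next I would use the modulation rule for the Fourier transform, $\bigl(\rho(\cdot)\e^{\im\gamma\cdot}\bigr)^\wedge(\xi)=\hat\rho(\xi-\gamma)$, so the analyticity condition reads $\hat\rho(\xi-\gamma)=0$ for a.e.\ $\xi\in\mathbb{R}_-$, that is, $\hat\rho(\eta)=0$ for a.e.\ $\eta<-\gamma$. Since $\rho$ is real-valued, $\hat\rho$ obeys the Hermitian relation $\hat\rho(-\eta)=\overline{\hat\rho(\eta)}$; applied to the previous identity this yields $\hat\rho(\eta)=0$ for a.e.\ $\eta>\gamma$ as well. Combining the two gives $\mathrm{supp}(\hat\rho)\subset[-\gamma,\gamma]$, which proves the necessity direction.

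For sufficiency I would simply reverse the chain: if $\mathrm{supp}(\hat\rho)\subset[-\gamma,\gamma]$ then $\hat\rho(\xi-\gamma)=0$ whenever $\xi<0$, so $\rho(\cdot)\e^{\im\gamma\cdot}$ has no negative spectrum and is therefore an analytic signal, i.e.\ $\mathcal{H}\bigl(\rho(\cdot)\e^{\im\gamma\cdot}\bigr)=-\im\,\rho(\cdot)\e^{\im\gamma\cdot}$, and Lemma \ref{lem:anal} returns \eqref{eq:bedrosian-linear-phase}.

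The argument is genuinely short, so there is no serious obstacle; the only points that need care are stating the equivalence ``analytic signal $\Leftrightarrow$ spectrum supported in $[0,\infty)$'' cleanly from \eqref{eqn:Hil3} and treating the measure-zero ambiguity at $\xi=0$ (where $\mathrm{sgn}$ vanishes), so that every identity is read almost everywhere. Note also that the hypotheses $\rho\neq 0$ and $\gamma>0$ are used only to keep the statement within the mono-component setting — the equivalence itself holds verbatim for any real $\rho\in L^2(\mathbb{R})$ and any $\gamma\geq 0$. I expect the whole proof to fit in a few lines.
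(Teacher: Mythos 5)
Your argument is correct and follows essentially the same route as the paper's own proof: reduce via Lemma \ref{lem:anal} to the analyticity of $\rho(\cdot)\e^{\im\gamma\cdot}$, translate that through \eqref{eqn:Hil3} into the vanishing of $\hat\rho(\xi-\gamma)$ on the negative half-line, and conclude with the Hermitian symmetry of $\hat\rho$ for real $\rho$. Your remarks on the a.e.\ reading at $\xi=0$ and on the inessential role of $\rho\neq 0$ are consistent with the paper's treatment, so there is nothing to add.
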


\begin{proof}
 Lemma \ref{lem:anal} implies that  equation \eqref{eq:bedrosian-linear-phase} is equivalent to \begin{equation}\label{eqn:analyticsignal}
{\cal
H}\left(\rho(\cdot)\e^{\im\gamma\cdot}\right)(t)=-\im\rho(t)\e^{\im\gamma
t},
\end{equation}
or equivalently,  the
complex signal $\rho(t)\e^{\im\gamma t}$ is an analytic signal.
Note that the Fourier transform of the function  on the left-hand side of equation \eqref{eqn:analyticsignal} is given by
$-\im\; {\mathrm{sgn}}(\xi)\hat\rho(\xi-\gamma)$ and the Fourier transform of the right-hand side of equation \eqref{eqn:analyticsignal} is  $-\im\hat\rho(\xi-\gamma)$.  This leads to the equivalent equation of (\ref{eq:bedrosian-linear-phase}) in the  frequency
domain
\begin{equation}
\left(\mathrm{sgn}(\xi)-1\right)\hat\rho(\xi-\gamma)=0,
\end{equation}
i.e.
\begin{equation}\label{eq:bedrosian-linear-phase-frequency}
\hat\rho(\xi)=0, \quad\mathrm{for}\quad \xi\in(-\infty,-\gamma].
\end{equation}
Thus if $\rho$ is bandlimited with $\mathrm{supp} \hat{\rho} \subset [\gamma, \gamma]$, then clearly   equation (\ref{eq:bedrosian-linear-phase-frequency}) is true, hence equation \eqref{eq:bedrosian-linear-phase} is true.  On the other hand,   assuming that equation \eqref{eq:bedrosian-linear-phase} is true,
we obtain that $\hat\rho(\xi)=0$ for $\xi\in(-\infty,-\gamma].$
Since $\rho$ is real-valued, by the Hermitian property of the Fourier transform of a real signal,   $\hat\rho(\xi)=0$ for
$\xi\in[\gamma,+\infty).$ Consequently we conclude that  $\rho$ is a
bandlimited function with its support in the frequency domain belongs to
$[-\gamma,\gamma].$
\end{proof}

\noindent  Remark: From this theorem we know that the amplitude function $\rho$ can be represented by  shifts of sinc function if and only if equation \eqref{eq:bedrosian-linear-phase} is true.
\section{How does non-linear phase determine amplitude?}
\setcounter{equation}{0}

In this section, we shall completely characterize a real-valued function $\rho\in \mathcal{S}_a$. We begin with  introducing  two one-sided filters that are related to the two-sided symmetric cascade filter $H_a$:
\begin{equation}\label{eq:laddershape-filter-onesides}
H^+_a(t):=H_a(t)\chi_{{\mathbb R}_+}(t),\qquad
H^-_a(t):=H_a(t)\chi_{{\mathbb R}_-}(t),
\end{equation}
where $ t\in\mathbb R$.
The next   lemma provides the Fourier transform pairs of $H_a^+$ and $H_a^-$, respectively.
\begin{lemma}\label{lem:fourierHa}
 The Fourier transform of $H^+_a$ and $H_a^-$ are  given, respectively, by
\begin{eqnarray}
(H^+_a(\cdot))^\wedge(\xi) &=& \frac{1}{\sqrt{2\pi}}\frac{1}{1-a\e^{-\im \xi}}\frac{1-\e^{-\im \xi}}{\im \xi}, \label{eqn:fourierHa}
\end{eqnarray}
and
$$
(H^-_a(\cdot))^\wedge(\xi)=\frac{1}
{\sqrt{2\pi}}\frac{1}{1-a\e^{\im \xi}}\frac{1-\e^{\im \xi}}{-\im \xi}.
$$
\end{lemma}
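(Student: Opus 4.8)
The plan is to compute $(H_a^+)^\wedge$ directly from the defining series. Recall from \eqref{eqn:general-H} and \eqref{eq:laddershape-filter-onesides} that
$$
H_a^+(\xi)=\sum_{n\in\mathbb{Z}_+}a^n\chi_{[n,n+1)}(\xi),
$$
since the positive part of $I_n$ is exactly $[n,n+1)$. Applying the Fourier transform \eqref{eqn:fourier} term by term, I would write
$$
(H_a^+)^\wedge(\xi)=\frac{1}{\sqrt{2\pi}}\sum_{n\in\mathbb{Z}_+}a^n\int_n^{n+1}\e^{-\im\xi t}\dd{t}
=\frac{1}{\sqrt{2\pi}}\sum_{n\in\mathbb{Z}_+}a^n\,\e^{-\im n\xi}\,\frac{1-\e^{-\im\xi}}{\im\xi},
$$
where I have factored the elementary integral $\int_n^{n+1}\e^{-\im\xi t}\dd{t}=\e^{-\im n\xi}\frac{1-\e^{-\im\xi}}{\im\xi}$. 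The interchange of summation and integration is justified by absolute convergence, exactly as in the proofs of Proposition~\ref{prop:sinc} and Theorem~\ref{prop:ortho}, since $(a^n)\in l^1(\mathbb{Z}_+)$ and each interval has unit length. Summing the geometric series $\sum_{n\in\mathbb{Z}_+}(a\e^{-\im\xi})^n=\frac{1}{1-a\e^{-\im\xi}}$, which converges because $|a|<1$, yields \eqref{eqn:fourierHa}.

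For $(H_a^-)^\wedge$ I would proceed identically, noting $H_a^-(\xi)=\sum_{n\in\mathbb{Z}_+}a^n\chi_{(-(n+1),-n]}(\xi)$, so that
$$
(H_a^-)^\wedge(\xi)=\frac{1}{\sqrt{2\pi}}\sum_{n\in\mathbb{Z}_+}a^n\int_{-(n+1)}^{-n}\e^{-\im\xi t}\dd{t}
=\frac{1}{\sqrt{2\pi}}\sum_{n\in\mathbb{Z}_+}a^n\,\e^{\im n\xi}\,\frac{1-\e^{\im\xi}}{-\im\xi},
$$
and summing the geometric series in $a\e^{\im\xi}$ gives the stated formula. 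Alternatively, one can observe that $H_a^-(\xi)=H_a^+(-\xi)$ because $H_a$ is even and $\chi_{\mathbb{R}_-}(\xi)=\chi_{\mathbb{R}_+}(-\xi)$, so $(H_a^-)^\wedge(\xi)=(H_a^+)^\wedge(-\xi)$; substituting $-\xi$ for $\xi$ in \eqref{eqn:fourierHa} then produces the second identity immediately.

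There is essentially no obstacle here: the only points requiring a word of justification are the termwise integration (absolute convergence of $\sum a^n$ times unit-length intervals) and the convergence of the geometric series (guaranteed by $|a\e^{\pm\im\xi}|=|a|<1$). A minor care point is that the factor $\frac{1-\e^{-\im\xi}}{\im\xi}$ has a removable singularity at $\xi=0$, with limiting value $1$; since $(a^n)\in l^1$ the partial sums converge uniformly and the resulting function is continuous, consistent with $H_a^\pm\in L^1(\mathbb{R})$, so the formula is to be read with this removable singularity understood.
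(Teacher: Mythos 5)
Your proof is correct and follows essentially the same route as the paper: a termwise Fourier transform of the piecewise-constant filter followed by summation of a geometric series in $a\e^{\mp\im\xi}$, with the second formula obtained from the reflection $H_a^-(\xi)=H_a^+(-\xi)$. The only (immaterial) difference is that the paper first rewrites $\sum_k a^k\chi_{[k,k+1)}=(1-a)\sum_k a^{k-1}\chi_{[0,k)}$ and integrates over $[0,k)$, whereas you integrate directly over $[n,n+1)$, which is slightly more streamlined.
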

 \begin{proof}
 We start with the  observation
 $$
 H_a^+(t)=\sum_{k\in \mathbb{Z}_+}a^k\chi_{[k,k+1)}(t)=(1-a)\sum_{k\in \mathbb{N}} a^{k-1}\chi_{[0,k)}(t).
 $$
 By the definition of the Fourier transform \eqref{eqn:fourier}, we obtain that
\begin{eqnarray*}
(H^+_a(\cdot))^\wedge(\xi) &=& \frac{1-a}{\sqrt{2\pi}}
\int_{\mathbb{R}}\sum_{k\in \mathbb{N}}a^{k-1}\chi_{[0,k)}(t)\e^{-\im \xi t}\dd{t}\\
&=&\frac{1-a}{\sqrt{2\pi}}\sum_{k\in \mathbb{N}}a^{k-1}\int _{[0,k)}\e^{-\im \xi t}\dd{t}\\
&=&\frac{1-a}{\sqrt{2\pi}}\sum_{k\in \mathbb{N}}a^{k-1}\frac{\e^{-\im k\xi}-1}{-\im \xi},
\end{eqnarray*}
where, the interchange of the order of the summation and the integral is justified by the absolute convergence of the series. Continue by noting $\sum_{k\in \mathbb{N}} a^{k-1}\e^{\im k \xi}=\frac{\e^{\im \xi}}{1-a\e^{\im \xi}} $ we have
\begin{eqnarray*}
(H^+_a(\cdot))^\wedge(\xi) &=&\frac{1}{\sqrt{2\pi}}\frac{1-a\e^{-\im \xi}-(1-a)\e^{-\im \xi}}{\im \xi(1-a\e^{-\im \xi})}\\
&=&\frac{1}{\sqrt{2\pi}}\frac{1}{1-a\e^{-\im \xi}}\frac{1-\e^{-\im \xi}}{\im \xi}.
\end{eqnarray*}

 Using the identity
$H^-_a(\cdot)=H^+_a(-\cdot)$, we obtain that
$$
(H^-_a(\cdot))^\wedge(\xi)=(H^+_a(\cdot))^\wedge(-\xi)=\frac{1}
{\sqrt{2\pi}}\frac{1}{1-a\e^{\im \xi}}\frac{1-\e^{\im \xi}}{-\im \xi}.
$$
\end{proof}

\begin{lemma}\label{eqn:lem:be}
The   quasi-Bedrosian type identity
\begin{equation}\label{eqn:quasibe}
{\cal
H}\left(p_a(\cdot)\sinc(\cdot)\right)(t)=\frac{1+a}{1-a}p_a(t){\cal
H}\sinc(t)
\end{equation}
is true for $ t\in\mathbb R$ and
\begin{equation}\label{eqn:sincHsinc}
\frac{1}{1-ae^{\im t}}\frac{\e^{\im t}-1}{\im t}=\frac{1}{1+a}\left(\sinc_a(t)+\im {\cal
H}\sinc_a(t)\right).
\end{equation}
\end{lemma}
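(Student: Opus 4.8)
The plan is to derive both identities from the explicit formulas already established for $\sinc_a$, $H_a^+$, and $\e^{\im\theta_a}$. For the second identity \eqref{eqn:sincHsinc}, I would start from the expression for $\sinc_a$ in equation \eqref{general-sinc-function} and split the symmetric cascade filter as $H_a = H_a^+ + H_a^-$ using \eqref{eq:laddershape-filter-onesides}. Recalling \eqref{eqn:Hil3}, which says $\mathcal{F}(\mathcal{H}f)(\xi) = -\im\,\mathrm{sgn}(\xi)\mathcal{F}f(\xi)$, the analytic signal $\sinc_a + \im\mathcal{H}\sinc_a$ has Fourier transform $2\,\mathcal{F}(\sinc_a)\chi_{\mathbb{R}_+}$, which by \eqref{general-sinc-function-f} equals $2\sqrt{\pi/2}(1+a)H_a\chi_{\mathbb{R}_+} = \sqrt{2\pi}(1+a)H_a^+$. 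Taking the inverse Fourier transform and applying Lemma \ref{lem:fourierHa} — specifically equation \eqref{eqn:fourierHa} for $(H_a^+)^\wedge$ — should produce, after taking the appropriate inverse transform and tidying the factor $\sqrt{2\pi}$, the right-hand side of \eqref{eqn:sincHsinc}. The only subtlety here is bookkeeping: one must be careful that \eqref{eqn:fourierHa} is the \emph{forward} transform of $H_a^+$, so inverting $\mathcal{F}(\sinc_a + \im\mathcal{H}\sinc_a) = \sqrt{2\pi}(1+a)H_a^+$ amounts to recognizing $\sinc_a + \im\mathcal{H}\sinc_a$ as $\sqrt{2\pi}(1+a)$ times the function whose forward transform is $H_a^+$ evaluated with the sign of the argument flipped; a short computation with the explicit rational expression pins down that this is exactly $(1+a)/(1-a\e^{\im t}) \cdot (\e^{\im t}-1)/(\im t)$.

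For the first identity \eqref{eqn:quasibe}, the natural route is to use \eqref{eqn:sinca}, which gives $\sinc_a = p_a \cdot \sinc$, together with the fact from Proposition \ref{lem:Sinc_bedrosian} that $\sinc_a \in \mathcal{S}_a$, i.e. $\mathcal{H}(\sinc_a \cos\theta_a) = \sinc_a \sin\theta_a$. Alternatively — and I think more cleanly — one observes directly that the real and imaginary parts of \eqref{eqn:sincHsinc} must match. Writing the left-hand side of \eqref{eqn:sincHsinc} as a genuine complex-valued function of $t$ and computing $|1-a\e^{\im t}|^{-2}$, one finds the real part is a multiple of $p_a(t)\sinc(t) = \sinc_a(t)$ (consistent with Proposition \ref{prop:sinc}) and the imaginary part is $\mathcal{H}\sinc_a(t)$. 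So \eqref{eqn:quasibe} should follow by: first noting $p_a \cdot \sinc = \sinc_a$ and $p_a$ is bounded, then writing $\mathcal{H}(\sinc_a) = \mathcal{H}(p_a \sinc)$, and extracting the claimed relation $\mathcal{H}(p_a\sinc) = \frac{1+a}{1-a}p_a \mathcal{H}(\sinc)$ by comparing the closed form of $\mathcal{H}\sinc_a$ obtained from \eqref{eqn:sincHsinc} against $\mathcal{H}\sinc(t) = (1-\cos t)/t$ (the standard value, obtainable from the same analytic-signal argument applied to the classic $\sinc$, i.e. the $a=0$ case of \eqref{eqn:sincHsinc}). The factor $\frac{1+a}{1-a}$ arises precisely from comparing $p_a$-weighted versions of the rational prefactor $1/(1-a\e^{\im t})$ with its value at the relevant symmetry point.

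I expect the main obstacle to be purely computational rather than conceptual: untangling the complex rational expression $\frac{1}{1-a\e^{\im t}}\frac{\e^{\im t}-1}{\im t}$ into its real and imaginary parts and verifying algebraically that these coincide with $\frac{1}{1+a}\sinc_a(t)$ and $\frac{1}{1+a}\mathcal{H}\sinc_a(t)$. Multiplying numerator and denominator by $1-a\e^{-\im t}$ and using \eqref{eqn:pa}, \eqref{eqn:sina}, \eqref{eqn:costhetaa} should make \eqref{eqn:sincHsinc} fall out, and then \eqref{eqn:quasibe} is a matter of reading off the imaginary part and comparing with the $a=0$ specialization. Care must be taken with the principal-value nature of $\mathcal{H}$ at $t=0$, but since $\sinc_a$ and all the functions involved are smooth (Proposition \ref{prop:sinc}, statement 4) and the Fourier-side identities hold in $L^2$, continuity lets us conclude the pointwise identities for all $t\in\mathbb{R}$.
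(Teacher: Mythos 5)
Your proposal is correct and uses the same ingredients as the paper's proof: the formula for $(H_a^+)^\wedge$ from Lemma \ref{lem:fourierHa}, the identification of the rational function $\frac{1}{1-a\e^{\im t}}\frac{\e^{\im t}-1}{\im t}$ as $\sqrt{2\pi}\,(H_a^+)^\wedge(-t)$ (hence an analytic signal, since its spectrum is supported on $\mathbb{R}_+$), and the explicit splitting of that function into real part $\frac{1}{1+a}p_a\sinc$ and imaginary part $\frac{1}{1-a}p_a\mathcal{H}\sinc$. The only difference is the order of deduction: the paper proves \eqref{eqn:quasibe} first (imaginary part equals the Hilbert transform of the real part because $r$ is analytic) and then substitutes to obtain \eqref{eqn:sincHsinc}, whereas you obtain \eqref{eqn:sincHsinc} first by computing $\mathcal{F}(\sinc_a+\im\mathcal{H}\sinc_a)=\sqrt{2\pi}(1+a)H_a^+$ and then read off \eqref{eqn:quasibe} by matching imaginary parts --- both orderings are valid and rest on the same computation.
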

\begin{proof}
Let $r(t):=\frac{1}{1-ae^{\im t}}\frac{\e^{\im t}-1}{\im t}$. Separating the real part from the imaginary part of $r(t)$ produces  \begin{eqnarray}
r(t) &=&\frac{1-a}{1-2a\cos
t+a^2}\frac{\sin t}{t}+\im  \frac{1+a}{1-2a\cos t+a^2}\frac{1-\cos
t}{t} \nonumber \\
&=& \frac{1}{1+a}p_a(t)\sinc(t)+\im \frac{1}{1-a}p_a(t){\cal
H}\sinc(t),\label{eqn:r(t)1}
\end{eqnarray}
where we have used the identity
\begin{equation}\label{eqn:hsinc}
\mathcal{H}\sinc (t) =\frac{1-\cos t}{t} .
\end{equation}

On the other hand, observe that  $r(t)=\sqrt{2\pi} \left( H_a^+ \right)^\wedge (-t)$, so the Fourier transform $\hat{r}(t)=\sqrt{2\pi} H_a^+(t)$   has zero negative spectrum. This implies $r$ is an analytic signal. Therefore the imaginary part of $r(t)$ equals to the Hilbert transform of its real part, that is, $$
\mathcal{H} \left( \frac{1}{1+a}p_a(\cdot)\sinc(\cdot)  \right)(t) =\frac{1}{1-a}p_a(t){\cal
H}\sinc(t).
$$
After rearranging the above equation we obtain equation \eqref{eqn:quasibe}. Now we turn to show equation \eqref{eqn:sincHsinc}.  Applying equation \eqref{eqn:quasibe} to the imaginary part of  equation \eqref{eqn:r(t)1} we continue to have that
\begin{eqnarray*}
r(t)&=&   \frac{1}{1+a}\left(p_a(t)\sinc(t)+ \im {\cal
H}(p_a(\cdot)\sinc(\cdot))\right)(t)\\
&=& \frac{1}{1+a}\left(\sinc_a(t)+\im {\cal
H}\sinc_a(t)\right),
\end{eqnarray*}
which is equation \eqref{eqn:sincHsinc}.
\end{proof}
\begin{coro} \label{coro:cosinc}
Let $\mathrm{cosinc}_a(t):=\frac{1-\cos\theta_a(t)}{t}$ and define $\mathrm{cosinc}_a(0)=\lim_{t\to 0}\frac{1-\cos\theta_a(t)}{t}$, then for $t\in \mathbb{R}$,
\begin{equation}\label{eqn:Hsina}
\mathcal{H}\sinc_a(t)=\mathrm{cosinc}_a(t).
\end{equation}
Moreover, the function $\mathrm{cosinc}_a$ is odd, bounded, infinitely differentiable,  $\mathrm{cosinc}_a(t)\le \left(\frac{1+a}{1-a}\right)^2\frac{3}{|t|+1}$ for $t\in \mathbb{R}$, and $\mathrm{cosinc}_a\in L^2(\mathbb{R})$.
\end{coro}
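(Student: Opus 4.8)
The plan is to extract the identity \eqref{eqn:Hsina} essentially for free from Lemma \ref{eqn:lem:be}, and then to settle the regularity, decay, and $L^2$ claims by elementary estimates already present in the paper. First I would start from \eqref{eqn:sincHsinc}, which reads $\frac{1}{1-a\e^{\im t}}\frac{\e^{\im t}-1}{\im t}=\frac{1}{1+a}\bigl(\sinc_a(t)+\im\,\mathcal{H}\sinc_a(t)\bigr)$. On the left-hand side I would use \eqref{analfu}, namely $F(\e^{\im t})=\frac{1}{1-a\e^{\im t}}$, together with \eqref{eqn:def_F} to rewrite $\frac{1}{1-a\e^{\im t}}=\frac{1}{1+a}\frac{B_a(\e^{\im t})-1}{\e^{\im t}-1}$, and then \eqref{eqn:phase} to replace $B_a(\e^{\im t})$ by $\e^{\im\theta_a(t)}$; the left-hand side of \eqref{eqn:sincHsinc} thereby collapses to $\frac{1}{1+a}\frac{\e^{\im\theta_a(t)}-1}{\im t}$. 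Expanding $\e^{\im\theta_a(t)}-1=-(1-\cos\theta_a(t))+\im\sin\theta_a(t)$ and dividing by $\im t$ gives exactly $\sinc_a(t)+\im\,\mathrm{cosinc}_a(t)$ by \eqref{eq-sinc-one-parameter} and the definition of $\mathrm{cosinc}_a$, so comparing imaginary parts in \eqref{eqn:sincHsinc} yields $\mathcal{H}\sinc_a(t)=\mathrm{cosinc}_a(t)$ for $t\neq0$, and at $t=0$ by continuity of both sides (Proposition \ref{prop:sinc} for the left, the defining limit for the right). An equivalent route, avoiding the Blaschke identity, is to take imaginary parts of \eqref{eqn:sincHsinc} directly from \eqref{eqn:r(t)1}, obtaining $\mathcal{H}\sinc_a(t)=\frac{1+a}{1-a}p_a(t)\frac{1-\cos t}{t}$, and then to verify from \eqref{eqn:costhetaa} the algebraic identity $1-\cos\theta_a(t)=\frac{(1+a)^2(1-\cos t)}{1-2a\cos t+a^2}$.

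Next I would handle the regularity. Oddness is immediate: $\sinc_a$ is even by Proposition \ref{prop:sinc} and $\mathcal{H}$ turns even functions into odd ones; equivalently, $\theta_a$ is odd because $\theta_a(0)=0$ and $\theta_a'=p_a$ is even, so $1-\cos\theta_a$ is even and $\mathrm{cosinc}_a$ is odd. For infinite differentiability the decisive observation is that $1-2a\cos t+a^2\ge(1-|a|)^2>0$ for every $t$, so $1-\cos\theta_a(t)=\frac{(1+a)^2(1-\cos t)}{1-2a\cos t+a^2}$ is a $C^\infty$ function of $t$ that is even and vanishes to second order at the origin; hence $\frac{1-\cos\theta_a(t)}{t^2}$ extends to an even $C^\infty$ function and $\mathrm{cosinc}_a(t)=t\cdot\frac{1-\cos\theta_a(t)}{t^2}$ is $C^\infty$ on all of $\mathbb{R}$. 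Boundedness will then follow from the pointwise bound established next.

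For the decay estimate and $L^2$ membership I would use the representation $\mathrm{cosinc}_a(t)=\frac{1+a}{1-a}p_a(t)\frac{1-\cos t}{t}$ obtained above (together with \eqref{eqn:hsinc}), bound $p_a(t)\le\frac{1+a}{1-a}$ by Proposition \ref{prop:sinc}, and bound $\bigl|\frac{1-\cos t}{t}\bigr|=\bigl|\frac{2\sin^2(t/2)}{t}\bigr|\le\min\{|t|/2,\,2/|t|\}\le\frac{3}{1+|t|}$, where the last inequality is elementary, reducing for $|t|\le2$ to $|t|(1+|t|)\le6$ and for $|t|\ge2$ to $2(1+|t|)\le3|t|$. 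Since $p_a>0$ and $1-\cos t\ge0$, the sign of $\mathrm{cosinc}_a(t)$ agrees with that of $t$, so the product of these bounds gives $\mathrm{cosinc}_a(t)\le|\mathrm{cosinc}_a(t)|\le\bigl(\frac{1+a}{1-a}\bigr)^2\frac{3}{1+|t|}$, which also proves boundedness. Finally $\mathrm{cosinc}_a\in L^2(\mathbb{R})$ follows either because $\frac{3}{1+|t|}\in L^2(\mathbb{R})$, or, more directly, because $\mathrm{cosinc}_a=\mathcal{H}\sinc_a$ with $\sinc_a\in L^2(\mathbb{R})$ (Proposition \ref{prop:sinc}) and $\mathcal{H}$ an isometry of $L^2(\mathbb{R})$.

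I do not expect a serious obstacle here: the only step needing a moment's thought is recognising that the left-hand side of \eqref{eqn:sincHsinc} is precisely $(1+a)^{-1}(\e^{\im\theta_a(t)}-1)/(\im t)$ — or, in the alternative route, verifying the trigonometric identity $1-\cos\theta_a(t)=\frac{(1+a)^2(1-\cos t)}{1-2a\cos t+a^2}$ from \eqref{eqn:costhetaa}. Once that is in place, the corollary is just a matter of matching real and imaginary parts and of the routine pointwise estimates above.
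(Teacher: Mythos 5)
Your proposal is correct and follows essentially the same route as the paper: both derive \eqref{eqn:Hsina} from Lemma \ref{eqn:lem:be} (your ``alternative route'' via the imaginary part of \eqref{eqn:r(t)1} together with the identity $1-\cos\theta_a(t)=\frac{1+a}{1-a}p_a(t)(1-\cos t)$ is exactly the paper's argument, and your primary route, collapsing the left side of \eqref{eqn:sincHsinc} to $\frac{1}{1+a}\frac{\e^{\im\theta_a(t)}-1}{\im t}$ via the Blaschke product, is only a cosmetic repackaging of it), and the regularity, decay and $L^2$ claims rest on the same representation $\mathrm{cosinc}_a(t)=\frac{1+a}{1-a}p_a(t)\frac{1-\cos t}{t}$ and the same bound on $\frac{1-\cos t}{t}$ that the paper uses. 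Your write-up is, if anything, slightly more careful than the paper's about smoothness at the origin and about the one-sidedness of the stated inequality.
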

\begin{proof}
Equation \eqref{eqn:Hsina} is just a new form of equation \eqref{eqn:quasibe}. Indeed, the left-hand side of equation \eqref{eqn:quasibe} is just $\mathcal{H}\sinc_a (t)$. Utilizing equations \eqref{eqn:pa} and \eqref{eqn:hsinc} then   the right-hand side of equation \eqref{eqn:quasibe} is   simplified to  $\mathrm{cosinc}_a(t)$. The second statement follows by rewriting $\mathrm{cosinc}_a$ as
\begin{equation}\label{eqn:cosinca}
\mathrm{cosinc}_a(t)=\frac{1+a}{1-a}p_a(t)\frac{1-\cos t}{t},
\end{equation}
and noting the function
\begin{equation}\label{eqn:hsinc}
\mathcal{H}\sinc (t)=\frac{1-\cos t}{t}
 \end{equation}
 is infinitely differentiable if we define $\mathcal{H}\sinc (0)=\lim_{t\to 0}\frac{1-\cos t}{t}$, and $\left|\frac{1-\cos t}{t}  \right|\le \frac{3}{1+|t|}$ for any $t\in \mathbb{R}$.
\end{proof}

An example of the graph of $\mathrm{cosinc}_a$ with $a=0.5$ is shown in figure \ref{fig:cosinc}. As a comparison, the graph of $\mathcal{H}\sinc (t) =\frac{1-\cos t}{t}$ is also shown in the figure.

\begin{figure}[ht]
$$
\includegraphics[scale=0.6]{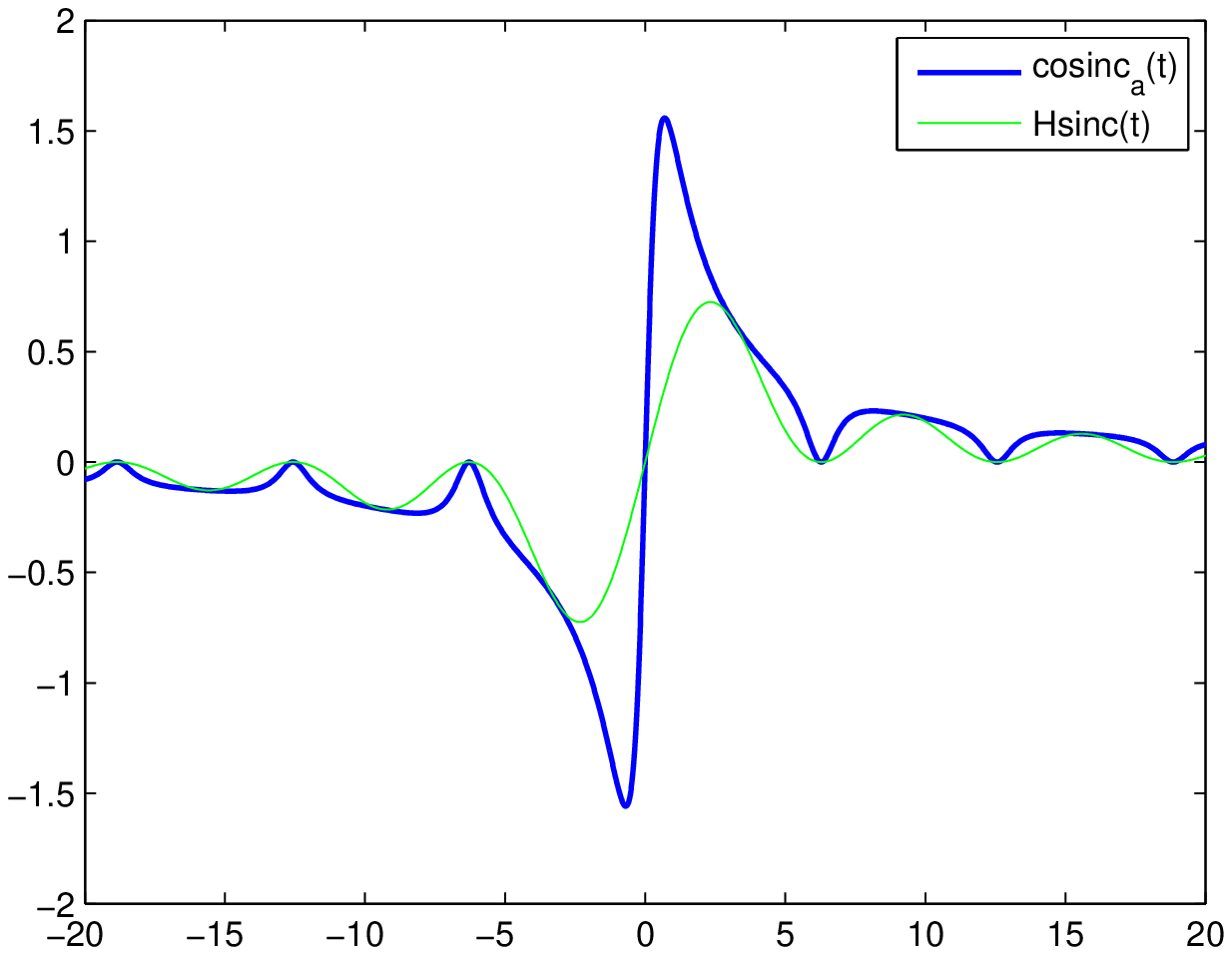}
$$
\caption{The graph of $\mathrm{cosinc}_a$,   $a=0.5$.}\label{fig:cosinc}
\end{figure}
The following lemma appeared in \cite{CCK10}. However,   a completely new  proof  by direct construction is given here for both sufficiency and necessity. The construction approach   casts a new insight to understand the relation of the spectrum $\hat{\rho}$ of a function $\rho\in \mathcal{S}_a$ to the symmetric cascade filter $H_a$.
\begin{lemma}\label{th:characterization}
A real signal $\rho\in \mathcal{S}_a$ %
if and only if there are two real
sequences $\boldsymbol{r}=\{r_k:k\in\mathbb Z\} $ and $\boldsymbol{s}=\{s_k:k\in\mathbb Z\}$ in $ l^2(\mathbb{Z})$
such that for $  t\in\mathbb R$,
\begin{equation}\label{eq:characterization}
\rho(t)=\sum_{k\in\mathbb
Z}r_k\sinc_a(t-2k\pi)+\sum_{k\in\mathbb Z}s_k({\cal
H}\sinc_a)(t-2k\pi).
\end{equation}
\end{lemma}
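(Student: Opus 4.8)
The plan is to work entirely on the Fourier side, using Lemma \ref{lemma:characterization-1} as the hard structural fact: a real signal $\rho\in\mathcal S_a$ if and only if $\hat\rho(\xi+\mathrm{sgn}(\xi)n)=a^n\hat\rho(\xi)$ for a.e. $\xi$ and all $n\in\mathbb Z_+$. This says $\hat\rho$ is ``self-similar'' under unit shifts with ratio $a$, exactly the ladder structure of $H_a$. So the idea is: a function $\rho\in\mathcal S_a$ is completely determined by its restriction $\hat\rho|_{(-1,1)}$, and conversely any $L^2$ function on $(-1,1)$ extends (uniquely) to such a $\hat\rho$. Since $(-1,1)$ has length $2$, an $L^2$ function on it has a Fourier series in the exponentials $\e^{\im k\pi\xi}$, $k\in\mathbb Z$, with $\ell^2$ coefficients — this is where the two sequences $\boldsymbol r,\boldsymbol s$ will come from, after splitting into even and odd parts or, more naturally, into the ``$\sinc_a$ part'' and the ``$\mathcal H\sinc_a$ part.''

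For sufficiency, the plan is straightforward: given $\boldsymbol r,\boldsymbol s\in\ell^2(\mathbb Z)$, the series in \eqref{eq:characterization} converges in $L^2(\mathbb R)$ because $\{\sinc_a(\cdot-2k\pi),\mathcal H\sinc_a(\cdot-2k\pi):k\in\mathbb Z\}$ is an orthogonal system by Theorem \ref{prop:ortho} (with known norm $\tfrac{1+a}{1-a}\pi$); hence $\rho\in L^2(\mathbb R)$. It is real-valued because $\sinc_a$ and $\mathcal H\sinc_a=\mathrm{cosinc}_a$ are real (Corollary \ref{coro:cosinc}) and the $r_k,s_k$ are real. Finally $\rho\in\mathcal S_a$ because $\sinc_a\in\mathcal S_a$ and $\mathcal H\sinc_a\in\mathcal S_a$ by Proposition \ref{lem:Sinc_bedrosian}, the shifts by $2k\pi$ preserve $\mathcal S_a$ (the characterization \eqref{eq:equivalent-conditions-1} is shift-covariant since a $2k\pi$ shift multiplies $\hat\rho$ by $\e^{-\im 2k\pi\xi}$, a $1$-periodic factor, so the relation $\hat\rho(\xi\pm n)=a^n\hat\rho(\xi)$ is preserved), and $\mathcal S_a$ is a closed subspace.

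For necessity — the main work — I would proceed by direct construction. Take a real $\rho\in\mathcal S_a$. By Lemma \ref{lemma:characterization-1}, $\hat\rho$ is determined by $g:=\hat\rho|_{(-1,1)}\in L^2(-1,1)$ via $\hat\rho(\xi)=a^n\,g(\xi-\mathrm{sgn}(\xi)n)$ on $|\xi|\in[n,n+1)$. Expand $g$ in a Fourier series on $(-1,1)$: $g(\xi)=\sum_{k\in\mathbb Z}c_k\e^{\im k\pi\xi}$ with $(c_k)\in\ell^2$. The key computation is to recognize the function whose Fourier transform is $\hat\rho$ built from the single mode $\e^{\im k\pi\xi}\chi_{(-1,1)}(\xi)$ propagated by the ladder rule: I claim that mode corresponds (up to constants) to a shift $\sinc_a(\cdot-2k\pi)$ or $\mathcal H\sinc_a(\cdot-2k\pi)$. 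Concretely, from \eqref{general-sinc-function-f} we have $\mathcal F\sinc_a=\sqrt{\pi/2}\,(1+a)H_a$, and from \eqref{eqn:sincHsinc}/\eqref{eqn:Hsina}, $\mathcal F(\sinc_a+\im\mathcal H\sinc_a)$ is supported on $\mathbb R_+$ and on $[n,n+1)$ equals $a^n$ times a fixed shape on $[0,1)$; the negative-frequency behavior is the conjugate-reflected copy. So $\mathcal F(\sinc_a(\cdot-2k\pi))=\e^{-\im 2k\pi\xi}\mathcal F\sinc_a(\xi)$, and on $|\xi|\in[n,n+1)$ the factor $\e^{-\im 2k\pi\xi}$ restricted there is $\e^{-\im 2k\pi(\xi-\mathrm{sgn}(\xi)n)}$ times $\e^{\im 2k\pi\,\mathrm{sgn}(\xi)n}=1$, i.e. it only sees the value of $\e^{-\im 2k\pi(\cdot)}$ on the base interval $(-1,1)$, which restricted there is $\e^{-\im 2k\pi\xi}$ — so different integer shifts $2k\pi$ produce exactly the Fourier-mode basis $\{\e^{\im 2k\pi\xi}\}$ on the base interval $(-1,1)$, but with the ``wrong'' period: $2k\pi$-shifts give $\e^{-\im 2k\pi\xi}$ which as functions of $\xi\in(-1,1)$ do \emph{not} form a complete set (completeness on an interval of length $2$ needs frequencies $k\pi$, not $2k\pi$). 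The resolution — and this is the crux — is that $\sinc_a$ contributes the even part of the base-interval shape and $\mathcal H\sinc_a$ the odd part, and together their $2k\pi$-shifts do span all of $L^2(-1,1)$; equivalently $\sinc_a\pm\im\mathcal H\sinc_a$ lives on $\mathbb R_\pm$ and the $2k\pi$-shifts of the one-sided function generate $L^2(0,1)$-shapes on the base interval $[0,1)$ via $\{\e^{\im 2k\pi\xi}\}$, which \emph{is} complete on an interval of length $1$. So I would match $g$ on $(0,1)$ and on $(-1,0)$ separately: write the one-sided data $\hat\rho\chi_{(0,\infty)}$ as a superposition $\sum_k(r_k+\im s_k')\mathcal F(\sinc_a+\im\mathcal H\sinc_a)(\cdot-2k\pi)$ by Fourier-expanding the base-interval shape on $(0,1)$, then take real and imaginary parts and use the Hermitian symmetry of $\hat\rho$ to get the negative side for free, reading off real sequences $\boldsymbol r,\boldsymbol s\in\ell^2$. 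The $\ell^2$ bound on the sequences comes from Parseval/Bessel for the orthogonal system (Theorem \ref{prop:ortho}) together with $\|\rho\|_2<\infty$.

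The main obstacle I anticipate is precisely this bookkeeping: making the correspondence ``Fourier mode $\e^{\im k\pi\xi}$ on the base interval $\leftrightarrow$ $\sinc_a$- or $\mathcal H\sinc_a$-shift'' exact, including the constant $\tfrac{1+a}{1-a}\pi$, and verifying that the even/odd (equivalently, $\pm$ one-sided) decomposition really does produce a \emph{complete} system on $(-1,1)$ so that every admissible $g$, hence every $\rho\in\mathcal S_a$, is captured — not just those in the closed span. Completeness will follow from the fact that $\{\e^{\im 2k\pi\xi}:k\in\mathbb Z\}$ is an orthonormal basis of $L^2(0,1)$ applied to the one-sided shape, but one must check that the ``shape'' $\mathcal F(\sinc_a+\im\mathcal H\sinc_a)$ restricted to $[0,1)$ is nonvanishing a.e. (it is, being $\tfrac{1}{1-a}\cdot$ a unit-modulus-ish factor times $\tfrac{1-\e^{-\im\xi}}{\im\xi}$ from Lemma \ref{lem:fourierHa} and \eqref{eqn:sincHsinc}), so that dividing by it is legitimate and the expansion coefficients are genuinely in $\ell^2$. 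Once that is pinned down, reversing the construction gives sufficiency and the equivalence is complete.
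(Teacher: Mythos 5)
Your proposal is correct and follows essentially the same route as the paper's own proof: reduce to base-interval data via Lemma \ref{lemma:characterization-1}, expand $\hat\rho\chi_{(0,1)}$ in the $1$-periodic modes $\e^{\im 2k\pi\xi}$ (so the coefficient sequences are automatically in $\ell^2$ by Parseval), recover the negative frequencies from Hermitian symmetry, and identify the resulting terms with shifts of $\sinc_a+\im\mathcal H\sinc_a$ through \eqref{eqn:sincHsinc}, with sufficiency checked by the shift relation \eqref{eq:equivalent-conditions-1}. One small correction to your ``anticipated obstacle'': $\mathcal F(\sinc_a+\im\mathcal H\sinc_a)=\sqrt{2\pi}(1+a)H_a^+$ is \emph{constant} on the base interval $[0,1)$ (the expression $\frac{1}{1-a\e^{-\im\xi}}\frac{1-\e^{-\im\xi}}{\im\xi}$ you quote is $(H_a^+)^\wedge$, i.e.\ it lives on the other side of the transform), so no division by a ``shape'' is needed and the $u_k$ are simply the ordinary Fourier coefficients of $\hat\rho|_{(0,1)}$.
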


\begin{proof}
We first show the necessity, that is, if $\rho\in \mathcal{S}_a$, then $\rho$ is given by equation \eqref{eq:characterization}. From Lemma
\ref{lemma:characterization-1}, we know that
$\rho(\xi)\chi_{\mathbb{R}_+}(\xi)$ and $\rho(\xi)\chi_{\mathbb{R}_-}(\xi)$ are
completely determined by $\rho(\xi)\chi_{(0,1)}(\xi)$ and
$\rho(\xi)\chi_{(-1,0)}(\xi)$, respectively. Expand
$\rho(\xi)\chi_{(0,1)}(\xi)$ and $\rho(\xi)\chi_{(-1,0)}(\xi)$ into
1-periodic functions $\gamma_1(\xi):=\displaystyle\sum_{k\in\mathbb
Z}u_k \e^{\im 2k\pi t}$ and
$\gamma_2(\xi):=\displaystyle\sum_{k\in\mathbb Z}v_k \e^{\im 2k\pi t}$, where the two series $(u_k: k\in \mathbb{Z})$ and $(v_k: k\in \mathbb{Z})$ are in $l^2(\mathbb{Z})$.

By Lemma \ref{lemma:characterization-1} we thus obtain that for $
\mathrm{a.e.}$ $\xi\in\mathbb R$,
$$
\hat\rho(\xi)=\gamma_1(\xi)H^+_a(\xi)+\gamma_2(\xi)H^-_a(\xi).
$$
The Hermitian property of $\hat\rho$ implies that for $  \xi\in\mathbb
R\setminus\{0\}$,
$$
\gamma_2(\xi)= \left(\gamma_1(-\xi)\right)^* .
$$
For convenience let $\gamma=\gamma_1$. Consequently $\hat\rho$ has the representation
$$
\hat\rho(\xi)=\gamma(\xi)H^+_a(\xi)+\left(\gamma(-\xi)\right)^*H^-_a(\xi),
$$
for $
\mathrm{a.e.}$ $\xi\in\mathbb R $.
Let $\rho_A$ be the analytic signal associated with $\rho$, which may be defined by
$$\hat\rho_A(\xi)=2\hat\rho(\xi)\chi_{\mathbb R_+}(\xi)$$
or equivalently for $  t\in\mathbb R$,
$$
\rho_A(t)=\rho(t)+\im{\cal H}\rho(t).
$$
Then we have $\rho(t)=\mbox{Re}(\rho_A(t)).$ We need to investigate
$\rho_A$. The inverse Fourier transform of $\hat{\rho}_A$ yields
\begin{eqnarray*}
\rho_A(t)&=&\frac{1}{\sqrt{2\pi}}\int_{\mathbb{R}}
2\gamma(\xi)H^+_a(\xi)\e^{\im \xi t}\dd{\xi}\\
&=& \frac{2}{\sqrt{2\pi}}\int_{\mathbb{R}}\sum_{k\in\mathbb
Z}u_k \e^{-\im 2k\pi\xi}H^+_a(\xi)\e^{\im \xi t}\dd{\xi}.
\end{eqnarray*}
Applying the Lebesgue dominated convergence theorem to interchange the order of the integral and the sum, and appealing to equation  \eqref{eqn:fourierHa}
we obtain that
\begin{eqnarray*}
\rho_A(t)&=&\frac{2}{\sqrt{2\pi}}\sum_{k\in\mathbb
Z}u_k\int_{\mathbb{R}}H^+_a(\xi)\e^{\im \xi(t-2k\pi)}\dd{\xi}\\
&=&\frac{2}{\sqrt{2\pi}}\sum_{k\in\mathbb Z}u_k
\frac{1}{1-a \e^{\im (t-2k\pi)}}\frac{\e^{\im (t-2k\pi)}-1}{\im (t-2k\pi)}.
\end{eqnarray*}
Denote $u_k=u^{(1)}_k+\im u^{(2)}_k$.  Recalling equation \eqref{eqn:sincHsinc}
we readily obtain that
$$
\rho(t)=\frac{2}{\sqrt{2\pi}(1+a)} \left( \sum_{k\in\mathbb
Z}u^{(1)}_k\sinc_a(t-2k\pi)- \sum_{k\in\mathbb
Z}u^{(2)}_k({\cal H}\sinc_a(\cdot))(t-2k\pi)\right).
$$
Let $r_k=\frac{2}{\sqrt{2\pi}(1+a)}u^{(1)}_k$ and
$s_k=-\frac{2}{\sqrt{2\pi}(1+a)}u^{(2)}_k$.
Since the series $(u_k,k\in \mathbb{Z})\in l^2(\mathbb{Z})$, we must have both series $(r_k,k\in \mathbb{Z})$ and $(s_k,k\in \mathbb{Z})$ in $l^2(\mathbb{Z})$.
We have arrived at
equation (\ref{eq:characterization}).

We now turn to the proof of sufficiency. It suffices to check that a
function $\rho\in L^2(\mathbb R)$ having the form
(\ref{eq:characterization}) satisfies the equation
(\ref{eq:equivalent-conditions-1}) by Lemma \ref{lemma:characterization-1}.  Applying the Fourier transform to both sides of equation  (\ref{eq:characterization}) and noting that the Fourier
transform of the generalized sampling function given by equation \eqref{general-sinc-function-f}  we get that
\begin{eqnarray*}
\hat\rho(\xi)= H_a(\xi)\left(M(\xi)-\im\ \mbox{sgn}(\xi)G(\xi)\right)
\end{eqnarray*}
with the two $1$-periodic functions
\begin{eqnarray*}
M(\xi)=\frac{\sqrt{2\pi}}{2}(1+a)\sum_{k\in \mathbb Z}r_k
\e^{-\im 2k\pi\xi},\quad G(\xi)=\frac{\sqrt{2\pi}}{2}(1+a)\sum_{k\in \mathbb
Z}s_k\e^{-\im 2k\pi\xi}.
\end{eqnarray*}
Observing the identity $H_a(\cdot+n)=a^{\vert n\vert}H_a(\cdot)$ and
the 1-periodicity of $M$ and $G$ lead to that, for   $\xi\in \mathbb{R}_+$,
\begin{eqnarray*}
\hat\rho(\xi+\vert n\vert)&=&H_a(\xi+\vert n\vert)(M(\xi+\vert
n\vert)-\im G(\xi+\vert n\vert))\\
&=&a^{\vert n\vert}H_a(\xi)\left(M(\xi)-\im G(\xi)\right)=a^{\vert
n\vert}\hat\rho(\xi).
\end{eqnarray*}
The case with negative $\xi$ can be shown similarly. This completes the proof of the lemma.
\end{proof}



\begin{theorem}
If $\rho\in  \mathcal{S}_a$, then both $\rho $ and $\mathcal{H}\rho$  are continuous.
\end{theorem}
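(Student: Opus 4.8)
We may take $\rho$ real-valued without loss of generality: if $\rho=\rho_1+\im\rho_2$ with $\rho_1,\rho_2$ real, then since the defining relation of $\mathcal S_a$ is linear and its multipliers $\cos\theta_a,\sin\theta_a$ are real, both $\rho_1$ and $\rho_2$ lie in $\mathcal S_a$, and continuity of $\rho_1,\mathcal H\rho_1,\rho_2,\mathcal H\rho_2$ yields continuity of $\rho$ and of $\mathcal H\rho$. The plan is then to derive continuity of a real $\rho\in\mathcal S_a$ directly from the explicit series representation of Lemma~\ref{th:characterization}, by proving that the two series there converge uniformly on all of $\mathbb R$; since each summand $\sinc_a(\cdot-2k\pi)$ and $(\mathcal H\sinc_a)(\cdot-2k\pi)=\mathrm{cosinc}_a(\cdot-2k\pi)$ is continuous (in fact $C^\infty$) by Proposition~\ref{prop:sinc} and Corollary~\ref{coro:cosinc}, a uniform limit of such functions is continuous.

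So first I would invoke Lemma~\ref{th:characterization} to write
$$
\rho(t)=\sum_{k\in\mathbb Z}r_k\,\sinc_a(t-2k\pi)+\sum_{k\in\mathbb Z}s_k\,(\mathcal H\sinc_a)(t-2k\pi),\qquad \boldsymbol r,\boldsymbol s\in l^2(\mathbb Z).
$$
The crux of the matter --- and the step I expect to be the only delicate one --- is that the coefficients are merely square-summable, whereas by the fifth statement of Proposition~\ref{prop:sinc} and by Corollary~\ref{coro:cosinc} the functions $\sinc_a$ and $\mathrm{cosinc}_a$ decay only like $O(1/(1+|t|))$, which is not summable, so a crude $l^1$ estimate on the tail is unavailable. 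The way around this is the Cauchy--Schwarz inequality together with the elementary fact that the shifted envelope is square-summable uniformly in the shift, namely
$$
C:=\sup_{t\in\mathbb R}\ \sum_{k\in\mathbb Z}\frac{1}{(1+|t-2k\pi|)^{2}}<\infty ,
$$
which holds because, choosing $k_0$ with $|t-2k_0\pi|\le\pi$, one has $|t-2k\pi|\ge 2\pi|k-k_0|-\pi$ for $k\neq k_0$, whence $\sum_k(1+|t-2k\pi|)^{-2}\le 1+2\sum_{j\ge 1}(1+2\pi j-\pi)^{-2}$, a bound independent of $t$. Invoking the decay bounds, for every $t\in\mathbb R$ and every $N$,
$$
\Big|\sum_{|k|>N}r_k\,\sinc_a(t-2k\pi)\Big|\ \le\ \frac{2(1+a)}{1-a}\sqrt{C}\,\Big(\sum_{|k|>N}|r_k|^{2}\Big)^{1/2},
$$
and the right-hand side tends to $0$ as $N\to\infty$ since $\boldsymbol r\in l^2(\mathbb Z)$; the same estimate with $\mathrm{cosinc}_a$ and $\boldsymbol s$ controls the second series. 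Hence both series converge uniformly on $\mathbb R$, and $\rho$ is continuous.

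Finally, for $\mathcal H\rho$ I would use either of two equivalent routes. The shortest: $\mathcal H\rho$ is again real-valued and in $L^2(\mathbb R)$, and belongs to $\mathcal S_a$ by Proposition~\ref{prop:Hf}, so the argument above applied to $\mathcal H\rho$ in place of $\rho$ gives its continuity. Alternatively, since the partial sums of the representation of $\rho$ converge to $\rho$ in $L^2(\mathbb R)$ while $\mathcal H$ is bounded on $L^2(\mathbb R)$ and commutes with translations, one may apply $\mathcal H$ term by term; using $\mathcal H^2=-\mathcal I$ from \eqref{eqn:Hil1} and $\mathcal H\sinc_a=\mathrm{cosinc}_a$ from \eqref{eqn:Hsina},
$$
\mathcal H\rho(t)=\sum_{k\in\mathbb Z}r_k\,\mathrm{cosinc}_a(t-2k\pi)-\sum_{k\in\mathbb Z}s_k\,\sinc_a(t-2k\pi),
$$
and the identical Cauchy--Schwarz/uniform-envelope bound shows this series converges uniformly on $\mathbb R$, so $\mathcal H\rho$ is continuous as well.
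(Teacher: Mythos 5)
Your proof is correct and takes essentially the same route as the paper's: represent $\rho$ via Lemma~\ref{th:characterization}, get uniform convergence of the two series by Cauchy--Schwarz, and reduce $\mathcal{H}\rho$ to $\rho$ via Proposition~\ref{prop:Hf}. The only difference is that you make explicit the uniform-in-$t$ bound $\sup_{t}\sum_{k}(1+|t-2k\pi|)^{-2}<\infty$ coming from the $O(1/(1+|t|))$ decay of $\sinc_a$ and $\mathrm{cosinc}_a$, which is the estimate the paper's one-line appeal to Cauchy--Schwarz implicitly relies on.
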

\begin{proof}
 By Proposition \ref{prop:Hf}, $\mathcal{H}\rho \in \mathcal{S}_a$, thus it suffices to show $\rho\in \mathcal{S}_a$ is continuous.
 Indeed, if $\rho\in \mathcal{S}_a$, then $\rho$ has the representation   \eqref{eq:characterization} by Lemma \ref{th:characterization}. Since both $\sinc_a$ and $\mathcal{H}\sinc_a$ are in $L^2(\mathbb{R})$ by Proposition \ref{prop:sinc} and Corollary \ref{coro:cosinc},  consequently by the Cauchy-Schwarz inequality, the two series on the right-hand side of equation \eqref{eq:characterization} converge uniformly, hence the limiting function $\rho$ is continuous.

\end{proof}

\begin{lemma}\label{lem:ortho-coef}
If    $f\in
\mathcal{S}_{a}$,  then for $k\in \mathbb{Z}$,
$$
\langle f, \sinc_{a}(\cdot -2k\pi )\rangle = f(2k\pi),
\quad \mbox{and}\quad \langle f,
\mathcal{H}\sinc_{a}(\cdot -2k\pi) \rangle =-
\mathcal{H}f(2k\pi).
$$
\end{lemma}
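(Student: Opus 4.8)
The plan is to read off the coefficients of $f$ in the sampling expansion of Lemma~\ref{th:characterization} in two different ways --- once by sampling $f$ and $\mathcal{H}f$ at the nodes $2k\pi$, once by pairing $f$ against the members of the orthogonal system $\Phi$ of Theorem~\ref{prop:ortho} --- and then to match the two. First I would invoke Lemma~\ref{th:characterization}: since $f\in\mathcal{S}_a$, there are real $l^2(\mathbb{Z})$ sequences $\boldsymbol{r}=\{r_k\}$ and $\boldsymbol{s}=\{s_k\}$ with
\[
f=\sum_{k\in\mathbb{Z}}r_k\,\sinc_a(\cdot-2k\pi)+\sum_{k\in\mathbb{Z}}s_k\,(\mathcal{H}\sinc_a)(\cdot-2k\pi).
\]
Because $\sinc_a$ and $\mathcal{H}\sinc_a=\mathrm{cosinc}_a$ are bounded and decay like $O(1/|t|)$ (Proposition~\ref{prop:sinc}(5) and Corollary~\ref{coro:cosinc}), the Cauchy--Schwarz inequality shows both series converge absolutely and uniformly on $\mathbb{R}$, so $f$ is continuous and may be evaluated termwise at any point. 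The same series also converges in $L^2(\mathbb{R})$; applying the bounded operator $\mathcal{H}$ termwise and using $\mathcal{H}^2=-\mathcal{I}$ (equation~\eqref{eqn:Hil1}) gives $\mathcal{H}f=\sum_k r_k(\mathcal{H}\sinc_a)(\cdot-2k\pi)-\sum_k s_k\sinc_a(\cdot-2k\pi)$, again a uniformly convergent, continuous series.

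Next I would evaluate the two building blocks at the nodes. By Proposition~\ref{prop:sinc}(3), $\sinc_a(2m\pi)=\frac{1+a}{1-a}\delta_{m,0}$. For the other one, Corollary~\ref{coro:cosinc} gives $(\mathcal{H}\sinc_a)(t)=\mathrm{cosinc}_a(t)=\frac{1+a}{1-a}\,p_a(t)\,\frac{1-\cos t}{t}$, and since $\frac{1-\cos t}{t}$ vanishes at every $t=2m\pi$ (it equals $0$ for $m\neq0$ because $\cos 2m\pi=1$, and tends to $0$ as $t\to0$), we get $(\mathcal{H}\sinc_a)(2m\pi)=0$ for all $m\in\mathbb{Z}$. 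Hence evaluating the two series of the first step at $t=2j\pi$ collapses each to its $k=j$ term: $f(2j\pi)=\frac{1+a}{1-a}r_j$ and $\mathcal{H}f(2j\pi)=-\frac{1+a}{1-a}s_j$, i.e.\ $r_j=\frac{1-a}{1+a}f(2j\pi)$ and $s_j=-\frac{1-a}{1+a}\mathcal{H}f(2j\pi)$.

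Finally I would pair $f$ with $\sinc_a(\cdot-2j\pi)$ and with $(\mathcal{H}\sinc_a)(\cdot-2j\pi)$. Using the $L^2$-convergence of the expansion to bring the inner product inside the sums and invoking the orthogonality of $\Phi$ (Theorem~\ref{prop:ortho}), every cross term drops and only the $k=j$ term survives, so $\langle f,\sinc_a(\cdot-2j\pi)\rangle=\langle\sinc_a,\sinc_a\rangle\,r_j$; and, using in addition the anti-involution and anti-self-adjointness identities \eqref{eqn:Hil1}--\eqref{eqn:Hil2} to get $\langle\mathcal{H}\sinc_a,\mathcal{H}\sinc_a\rangle=\langle\sinc_a,\sinc_a\rangle$, one gets $\langle f,(\mathcal{H}\sinc_a)(\cdot-2j\pi)\rangle=\langle\sinc_a,\sinc_a\rangle\,s_j$. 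Substituting the values of $r_j$ and $s_j$ from the previous step, the normalizing constant $\langle\sinc_a,\sinc_a\rangle=\frac{1+a}{1-a}$ (as computed in the proof of Theorem~\ref{prop:ortho}) cancels the reciprocal factor, and one obtains $\langle f,\sinc_a(\cdot-2j\pi)\rangle=f(2j\pi)$ and $\langle f,(\mathcal{H}\sinc_a)(\cdot-2j\pi)\rangle=-\mathcal{H}f(2j\pi)$.

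I expect the only place where real care is needed to be the analytic bookkeeping of the first step: that $f$ and $\mathcal{H}f$ can legitimately be evaluated pointwise through the series --- this is exactly where the $O(1/|t|)$ decay bounds enter, via Cauchy--Schwarz --- and that $\mathcal{H}$ may be passed through the infinite sum, which needs boundedness of $\mathcal{H}$ on $L^2$ together with $L^2$-convergence of the series. Everything after that is the two bookkeeping computations above, and the one point to keep straight is to use the same value of $\langle\sinc_a,\sinc_a\rangle$ in the pairing step as the inverse factor produced in the sampling step so that it cancels cleanly. (A purely Fourier-side variant is also available --- apply Parseval with $\mathcal{F}\sinc_a=\sqrt{\pi/2}\,(1+a)H_a$ from \eqref{general-sinc-function-f} and reduce both $\langle f,\sinc_a(\cdot-2k\pi)\rangle$ and the inversion formula for $f(2k\pi)$ to $\int_{(-1,1)}\hat{\rho}(\eta)\e^{\im 2k\pi\eta}\dd{\eta}$ using the recursion of Lemma~\ref{lemma:characterization-1} --- but the expansion-based argument fits the sampling theorem of the next section more naturally.)
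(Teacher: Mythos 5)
Your route is genuinely different from the paper's. The paper proves the first identity in two lines: by the $2\pi$-periodicity of $\sin\theta_a$, $\langle f,\sinc_a(\cdot-2k\pi)\rangle=\int_{\mathbb R}f(t)\frac{\sin\theta_a(t-2k\pi)}{t-2k\pi}\,\mathrm{d}t$ is (up to normalization) $-\mathcal{H}(f\sin\theta_a)(2k\pi)$, and $\mathcal{H}(f\sin\theta_a)=\mathcal{H}^2(f\cos\theta_a)=-f\cos\theta_a$ by the defining identity of $\mathcal{S}_a$, so the value is $f(2k\pi)\cos\theta_a(2k\pi)=f(2k\pi)$; the second identity then follows from $\mathcal{H}f\in\mathcal{S}_a$ and anti-self-adjointness exactly as you use them. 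Your argument instead goes through the full expansion of Lemma~\ref{th:characterization} plus the orthogonality of $\Phi$; it is non-circular (both ingredients are established independently of this lemma) and the analytic bookkeeping in your first two steps is sound, but it is considerably heavier than the paper's direct computation, and it makes the conclusion depend on the exact normalization $\langle\sinc_a,\sinc_a\rangle$, which is where your writeup goes wrong.

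Concretely: by Proposition~\ref{prop:sinc}(6) (and the explicit Parseval computation in its proof), $\langle\sinc_a,\sinc_a\rangle=\frac{1+a}{1-a}\,\pi$, not $\frac{1+a}{1-a}$. The line in the proof of Theorem~\ref{prop:ortho} that you cite for the value without the $\pi$ is itself a misquotation of Proposition~\ref{prop:sinc} (it refers to statement~3, about point values, rather than statement~6). Carried out with the correct norm, your method yields $\langle f,\sinc_a(\cdot-2j\pi)\rangle=\pi f(2j\pi)$ and $\langle f,\mathcal{H}\sinc_a(\cdot-2j\pi)\rangle=-\pi\,\mathcal{H}f(2j\pi)$. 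This does not match the lemma as printed, but the discrepancy is not your method's fault: the test case $a=0$, $f=\sinc$ gives $\langle\sinc,\sinc\rangle=\pi$ while $\sinc(0)=1$, so the lemma itself is missing a factor of $\pi$ (the paper's proof drops the $\frac1\pi$ from the definition of $\mathcal{H}$ when it writes $\int f(t)\frac{\sin\theta_a(t-2k\pi)}{t-2k\pi}\,\mathrm{d}t=-\mathcal{H}(f\sin\theta_a)(2k\pi)$, and the same missing $\pi$ propagates into the constants of Theorem~\ref{thm:sampling}). So: your approach is valid and in fact serves as a useful consistency check that exposes the normalization error, but as written your final cancellation only "works" because you imported the wrong value of $\langle\sinc_a,\sinc_a\rangle$; you should either correct that constant and flag the resulting factor of $\pi$, or give the paper's shorter direct argument.
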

\begin{proof}
The $2\pi$-periodicity of $\sin {\theta_a}$ yields
$$
 \langle f,
\sinc_{a}(\cdot -2k\pi)\rangle = \int_{\mathbb{R}} f(t)
\frac{\sin {\theta_a} (t-2k\pi)}{t-2k\pi}\dd{t} = - \mathcal{H} (f\sin
{\theta_a})(2k\pi).
$$
Therefor by the assumption on $f$, we have when  $k\in \mathbb{Z}$,
$$
- \mathcal{H} (f\sin {\theta_a})(2k\pi)=f(2k\pi)\cos {\theta_a} (2k\pi)
=f(2k\pi)
$$
because $\cos {\theta_a} (2k\pi)=\cos {\theta_a} (0)=1$ by equation \eqref{eqn:costhetaa}.

The second equality follows by noting the fact that $\mathcal{H}f \in \mathcal{S}_{a}$
 due to Proposition \ref{prop:Hf}
and the Hilbert transformer is an anti-self adjoint operator thus
$$\langle f, \mathcal{H}\sinc_{a}(\cdot -2k\pi\rangle=
-\langle \mathcal{H}f, \sinc_{a}(\cdot -2k\pi\rangle =-\mathcal{H}f(2k\pi). $$
\end{proof}

Lemma \ref{th:characterization}, Lemma \ref{lem:ortho-coef}, and Theorem \ref{prop:ortho} immediately implies the following theorem.
\begin{theorem}\label{thm:sampling}
Any real-valued
function $\rho \in \mathcal{S}_{a}$  {\em if and only
if}
$$
\rho (t)=\frac{1-a}{1+a}\sum_{k\in \mathbb{Z}}\rho(2k\pi)
 {\mathrm{sinc}}_{a} (t-2k\pi)- \frac{1-a}{1+a}\sum_{k\in
\mathbb{Z}}\mathcal{H}\rho(2k\pi)\mathcal{H}
{\mathrm{sinc}}_{a}(t-2k\pi).
$$
Moreover, the sampling sequences $\left(\rho(2k\pi), k\in
\mathbb{Z}\right) $ and
$\left(\mathcal{H}\rho(2k\pi), k\in \mathbb{Z}\right) $ are in $
l^2(\mathbb{Z})$.
\end{theorem}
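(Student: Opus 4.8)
The plan is to assemble the theorem from the three results invoked immediately before its statement: the structural representation of $\mathcal{S}_a$ in Lemma \ref{th:characterization}, the orthogonality of the system $\Phi$ in Theorem \ref{prop:ortho}, and the reproducing identities of Lemma \ref{lem:ortho-coef}. I would prove the necessity direction first, since the explicit coefficients it produces are exactly what is needed to obtain the sufficiency direction from Lemma \ref{th:characterization}.

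For necessity, suppose $\rho\in\mathcal{S}_a$ is real-valued. By Lemma \ref{th:characterization} there exist real sequences $\boldsymbol r,\boldsymbol s\in l^2(\mathbb Z)$ with
\[
\rho=\sum_{k\in\mathbb Z}r_k\,\sinc_a(\cdot-2k\pi)+\sum_{k\in\mathbb Z}s_k\,(\mathcal H\sinc_a)(\cdot-2k\pi).
\]
By Theorem \ref{prop:ortho} the functions $\sinc_a(\cdot-2k\pi)$ and $(\mathcal H\sinc_a)(\cdot-2k\pi)$, $k\in\mathbb Z$, form an orthogonal family, each of squared norm $\langle\sinc_a,\sinc_a\rangle=\frac{1+a}{1-a}$ (the value for the $\mathcal H\sinc_a$-translates following from \eqref{eqn:Hil1} and \eqref{eqn:Hil2}); hence both series converge in $L^2(\mathbb R)$ and inner products against members of $\Phi$ may be taken term by term. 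Fixing $j\in\mathbb Z$, I would compute $\langle\rho,\sinc_a(\cdot-2j\pi)\rangle$ in two ways: orthogonality collapses the expansion to the single term $r_j\,\frac{1+a}{1-a}$, while Lemma \ref{lem:ortho-coef} evaluates the same inner product as $\rho(2j\pi)$, giving $r_j=\frac{1-a}{1+a}\rho(2j\pi)$. Pairing instead with $(\mathcal H\sinc_a)(\cdot-2j\pi)$ gives $s_j\,\frac{1+a}{1-a}$ on one side and $-\mathcal H\rho(2j\pi)$ on the other (again Lemma \ref{lem:ortho-coef}), so $s_j=-\frac{1-a}{1+a}\mathcal H\rho(2j\pi)$. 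Substituting these coefficients back into the expansion produces the displayed sampling formula, and since $\boldsymbol r,\boldsymbol s\in l^2(\mathbb Z)$, the sampling sequences $(\rho(2k\pi))_{k\in\mathbb Z}$ and $(\mathcal H\rho(2k\pi))_{k\in\mathbb Z}$ are square-summable.

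For sufficiency, suppose $\rho$ equals the displayed sum. Because $\Phi$ is orthogonal with constant norm, $L^2$-convergence of the two sums forces $(\rho(2k\pi))_{k\in\mathbb Z}$ and $(\mathcal H\rho(2k\pi))_{k\in\mathbb Z}$ to lie in $l^2(\mathbb Z)$; the formula is then exactly the representation \eqref{eq:characterization} with $r_k=\frac{1-a}{1+a}\rho(2k\pi)$ and $s_k=-\frac{1-a}{1+a}\mathcal H\rho(2k\pi)$, so Lemma \ref{th:characterization} gives $\rho\in\mathcal{S}_a$. I do not anticipate a genuine obstacle here; the two points needing care are (i) justifying the $L^2$-convergence of the orthogonal expansion so that the inner-product computations are legitimate — the standard Hilbert-space fact about orthogonal families with $l^2$ coefficients, available once Theorem \ref{prop:ortho} is granted — and (ii) keeping the normalizing constant $\langle\sinc_a,\sinc_a\rangle$ straight so that the prefactor $\frac{1-a}{1+a}$ emerges with the correct value and sign. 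Both are bookkeeping rather than substance.
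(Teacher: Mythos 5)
Your proposal is correct and is exactly the paper's intended argument: the paper offers no written proof, stating only that Lemma \ref{th:characterization}, Lemma \ref{lem:ortho-coef} and Theorem \ref{prop:ortho} ``immediately'' imply the theorem, and your assembly of those three results (expand $\rho$ via Lemma \ref{th:characterization}, then identify the $l^2$ coefficients by pairing with the orthogonal system $\Phi$ and evaluating the inner products via Lemma \ref{lem:ortho-coef}) is precisely that derivation made explicit, with the sufficiency direction read off from the same identification. One bookkeeping remark: Proposition \ref{prop:sinc}(6) actually gives $\langle\sinc_a,\sinc_a\rangle=\pi\tfrac{1+a}{1-a}$ rather than the $\tfrac{1+a}{1-a}$ you quote, but under the paper's normalization of $\mathcal H$ the inner products in Lemma \ref{lem:ortho-coef} carry the same factor of $\pi$, so the ratio --- and hence your prefactor $\tfrac{1-a}{1+a}$ --- comes out correctly either way.
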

From Theorem \ref{thm:sampling} and Theorem \ref{prop:ortho} we
have the following corollary.
\begin{coro}
The system $\Phi$ defined in \eqref{eqn:Phi} is a complete
orthogonal system of the subspace  $\mathcal{S}_{a}\subset L^2(\mathbb{R}) $. That is, the subspace equals to the closure of the span of the set $\Phi$, or symbolically,
$$
\mathcal{S}_{a} =\overline{\mathrm{span} \Phi}.
$$
\end{coro}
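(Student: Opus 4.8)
The plan is to prove the set identity $\mathcal{S}_a=\overline{\mathrm{span}\,\Phi}$ by establishing the two inclusions separately; the orthogonality half of the statement is already Theorem~\ref{prop:ortho}. Before doing so I would record two easy structural facts about $\mathcal{S}_a$. First, $\mathcal{S}_a$ is \emph{closed} in $L^2(\mathbb R)$: if $\rho_n\to\rho$ in $L^2$ with each $\rho_n\in\mathcal{S}_a$, then since $\cos\theta_a$ and $\sin\theta_a$ are bounded (by \eqref{eqn:sina} and \eqref{eqn:costhetaa}) multiplication by them is continuous on $L^2$, and since $\mathcal H$ is bounded on $L^2$, one passes to the limit in $\mathcal H(\rho_n\cos\theta_a)=\rho_n\sin\theta_a$ to obtain $\mathcal H(\rho\cos\theta_a)=\rho\sin\theta_a$, i.e.\ $\rho\in\mathcal{S}_a$. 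Second, $\mathcal{S}_a$ is \emph{invariant} under translation by $2k\pi$, $k\in\mathbb Z$: since $\theta_a$ is $2\pi$-periodic, $\cos\theta_a$ and $\sin\theta_a$ are $2\pi$-periodic, and $\mathcal H$ commutes with translations, so $\rho\in\mathcal{S}_a$ forces $\rho(\cdot-2k\pi)\in\mathcal{S}_a$ for every $k$.

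For $\overline{\mathrm{span}\,\Phi}\subseteq\mathcal{S}_a$ it then suffices to check $\Phi\subseteq\mathcal{S}_a$ and take closures, using that $\mathcal{S}_a$ is a closed subspace. But Proposition~\ref{lem:Sinc_bedrosian} gives $\sinc_a\in\mathcal{S}_a$ and $\mathcal H\sinc_a\in\mathcal{S}_a$, and the translation invariance just noted upgrades this to $\sinc_a(\cdot-2k\pi)\in\mathcal{S}_a$ and $\mathcal H\sinc_a(\cdot-2k\pi)\in\mathcal{S}_a$ for all $k\in\mathbb Z$, i.e.\ $\Phi\subseteq\mathcal{S}_a$.

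For the reverse inclusion $\mathcal{S}_a\subseteq\overline{\mathrm{span}\,\Phi}$, I would invoke Theorem~\ref{thm:sampling} (equivalently Lemma~\ref{th:characterization}): a real-valued $\rho\in\mathcal{S}_a$ equals, pointwise on $\mathbb R$, the sum of the two series in $\sinc_a(\cdot-2k\pi)$ and $\mathcal H\sinc_a(\cdot-2k\pi)$ with coefficient sequences in $l^2(\mathbb Z)$. Because $\Phi$ is orthogonal and all its members share one common positive squared norm (as computed in the proof of Theorem~\ref{prop:ortho}), the $l^2$-summability of the coefficients makes the partial sums a Cauchy sequence in $L^2(\mathbb R)$, so the series also converges \emph{in $L^2$} to some $g\in\overline{\mathrm{span}\,\Phi}$. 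Extracting a subsequence of partial sums converging a.e.\ and comparing with the pointwise limit identifies $g=\rho$ a.e., whence $\rho\in\overline{\mathrm{span}\,\Phi}$. Combining the two inclusions gives $\mathcal{S}_a=\overline{\mathrm{span}\,\Phi}$, and together with Theorem~\ref{prop:ortho} this is exactly the statement that $\Phi$ is a complete orthogonal system for $\mathcal{S}_a$.

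The only point requiring genuine care is the change in the mode of convergence: Theorem~\ref{thm:sampling} is phrased as a pointwise reconstruction, so the crux is promoting it to norm convergence in $L^2$ and matching the two limits, for which the $l^2$ statement on the sampling sequences and the orthogonality of $\Phi$ are precisely what is needed. The closedness and the $2\pi\mathbb Z$-translation invariance of $\mathcal{S}_a$ are routine once they are isolated as above.
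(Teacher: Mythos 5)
Your proposal is correct and takes essentially the same route as the paper, which states this corollary as an immediate consequence of Theorem~\ref{thm:sampling} and Theorem~\ref{prop:ortho} without writing out a proof. Your two-inclusion argument simply supplies the details the paper leaves implicit --- the closedness and $2\pi\mathbb{Z}$-translation invariance of $\mathcal{S}_a$ for the inclusion $\overline{\mathrm{span}\,\Phi}\subseteq\mathcal{S}_a$, and the promotion of the pointwise sampling expansion to $L^2$ convergence via the $l^2$ coefficient bounds and orthogonality for the reverse inclusion.
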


Lastly, we state some facts for the case $a=0$. Note when $a=0$, $\sin \theta_a(t)$ becomes $\sin t$, $\cos \theta_a(t)$ becomes $\cos t$,  $\sinc_a$ becomes $\sinc$ and $\mathcal{H}\sinc_a$ becomes $\mathcal{H}\sinc$ which is given at $t\in \mathbb{R}$ by equation \eqref{eqn:hsinc}. Therefore Theorem \ref{th-linearcase}, Theorem \ref{thm:sampling} and the well-known Shannon sampling theorem imply the following corollary.
\begin{coro}
The following statements are equivalent:
\begin{enumerate}
\item The real signal $\rho\in L^2(\mathbb{R})$ is bandlimited such that $\mathrm{supp}\hat{\rho}\subseteq [-1,1]$.
    \item $\mathcal{H}(\rho(\cdot) \cos(\cdot)) (t) =\rho(t) \sin t$, $t\in \mathbb{R}$.
    \item $\mathcal {H} (\rho(\cdot)\e^{\im \cdot}  )(t)=-\im \rho(t)\e^{\im t}$, $t\in \mathbb{R}$.
     \item    $
\rho (t)=\sum_{k\in \mathbb{Z}}\rho(2k\pi)
 {\mathrm{sinc}} (t-2k\pi)- \sum_{k\in
\mathbb{Z}}\mathcal{H}\rho(2k\pi)
\frac{1-\cos(t-2k\pi)}{t-2k\pi}
$, $t\in \mathbb{R}$.
\item $\rho(t)=\sum_{k\in \mathbb{Z}} \rho{(k\pi)}\sinc(t-k\pi)$, $t\in \mathbb{R}$.
\end{enumerate}
\end{coro}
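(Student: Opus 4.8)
The plan is to derive all five equivalences by specializing to the degenerate parameter $a=0$ the results of Sections 4 and 5, together with the classical Shannon sampling theorem; the case $\rho\equiv 0$ is trivial for each of the five statements, so we may assume $\rho$ is non-zero. The preliminary observation is that when $a=0$ one has $p_0\equiv 1$, hence $\theta_0(t)=t$, so that $\sin\theta_0(t)=\sin t$ and $\cos\theta_0(t)=\cos t$, $\sinc_0=\sinc$ by the first statement of Proposition \ref{prop:sinc}, and $\mathcal H\sinc_0(t)=\mathrm{cosinc}_0(t)=(1-\cos t)/t=\mathcal H\sinc(t)$ by Corollary \ref{coro:cosinc} and equation \eqref{eqn:hsinc}. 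Consequently the Bedrosian subspace $\mathcal S_0$ is exactly the set of $\rho\in L^2(\mathbb R)$ satisfying statement (2), and the constant $\tfrac{1-a}{1+a}$ occurring in Theorem \ref{thm:sampling} equals $1$.

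Granting these identifications, the equivalences follow directly. Statements (2) and (3) are equivalent by Lemma \ref{lem:anal} applied with the linear phase $\phi(t)=t$. Statement (2) is precisely equation \eqref{eq:bedrosian-linear-phase} with $\gamma=1$, so Theorem \ref{th-linearcase} shows it is equivalent to $\rho$ being bandlimited with $\mathrm{supp}\hat\rho\subseteq[-1,1]$, which is statement (1). Since statement (2) says exactly that $\rho\in\mathcal S_0$, Theorem \ref{thm:sampling} with $a=0$ — after inserting $\sinc_0=\sinc$, $\mathcal H\sinc_0(t-2k\pi)=(1-\cos(t-2k\pi))/(t-2k\pi)$, and $\tfrac{1-a}{1+a}=1$ — yields the representation in statement (4), the membership of the two sampling sequences in $l^2(\mathbb Z)$ being part of that theorem. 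Finally, (1) $\Leftrightarrow$ (5) is the Shannon sampling theorem: the Nyquist spacing for bandwidth $[-1,1]$ is $\pi$, so a function with $\mathrm{supp}\hat\rho\subseteq[-1,1]$ is recovered from its samples on $\pi\mathbb Z$ by $\rho(t)=\sum_{k}\rho(k\pi)\sinc(t-k\pi)$, while conversely any $L^2$-convergent series of this form has spectrum contained in $[-1,1]$.

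The one spot where a reader might hesitate is the coexistence of statements (4) and (5): statement (5) reconstructs $\rho$ from its values on the fine grid $\pi\mathbb Z$, whereas (4) uses only the coarser grid $2\pi\mathbb Z$ but compensates with the additional samples $\mathcal H\rho(2k\pi)$ of the Hilbert transform. There is no contradiction — both parametrize the same space of functions bandlimited to $[-1,1]$ — and rather than matching the two sampling kernels by hand I would route both formulas through statement (1), which makes the equivalence transparent. I would close with a remark that (4) is nothing but the $a=0$ instance of the general sampling formula of Theorem \ref{thm:sampling}, which is the conceptual reason this corollary is worth recording. The proof presents no genuine obstacle; its only content is bookkeeping the $a=0$ reductions and invoking the standard sampling theorem at the correct bandwidth.
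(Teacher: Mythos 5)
Your proposal is correct and follows exactly the route the paper intends: the paper's entire justification is the remark that at $a=0$ one has $\theta_0(t)=t$, $\sinc_0=\sinc$, $\mathcal H\sinc_0=\mathcal H\sinc$, and $\tfrac{1-a}{1+a}=1$, so that Theorem \ref{th-linearcase}, Theorem \ref{thm:sampling}, Lemma \ref{lem:anal}, and the Shannon sampling theorem give the five equivalences. Your write-up merely makes explicit the bookkeeping the paper leaves implicit, and your routing of (4) and (5) through statement (1) is the natural way to reconcile the two sampling grids.
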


\end{document}